  \DeclareFontFamily{U}{mathb}{\hyphenchar\font45}
\DeclareFontShape{U}{mathb}{m}{n}{
      <5> <6> <7> <8> <9> <10> gen * mathb
      <10.95> mathb10 <12> <14.4> <17.28> <20.74> <24.88> mathb12
      }{}
\DeclareSymbolFont{mathb}{U}{mathb}{m}{n}
\DeclareFontFamily{U}{matha}{\hyphenchar\font45}
\DeclareFontShape{U}{matha}{m}{n}{
      <5> <6> <7> <8> <9> <10> gen * matha
      <10.95> matha10 <12> <14.4> <17.28> <20.74> <24.88> matha12
      }{}
\DeclareSymbolFont{matha}{U}{matha}{m}{n}
\DeclareMathSymbol{\oasterisk}{3}{matha}{"66}
\DeclareMathSymbol{\boxasterisk}{3}{mathb}{"66}
\pretocmd{\blx@head@bibintoc}{\phantomsection}{}{\ddt}
\titleformat*{\section}{\bfseries}
\titleformat*{\subsection}{\normalsize\bfseries}
\titleformat*{\subsubsection}{\bfseries}
\titleformat*{\paragraph}{\large\bfseries}
\titleformat*{\subparagraph}{\large\bfseries}
\titlespacing\section{0pt}{12pt plus 4pt minus 2pt}{2pt plus 2pt minus 2pt}
\definecolor{dullmagenta}{rgb}{0.4,0,0.4}   % #660066
\definecolor{darkblue}{rgb}{0,0,0.4}
\newcommand{\ketbra}[1]{|#1\rangle\langle #1|}
\renewcommand{\epsilon}{\varepsilon}
\renewcommand{\phi}{\varphi}
\def\eps{\epsilon}
\def\tr{{\rm Tr}}
\def\eps{\epsilon}
\def\id{\mathbbm{1}}
\definecolor{mblue}{rgb}{0.368417, 0.506779, 0.709798}
\definecolor{morange}{rgb}{0.880722, 0.611041, 0.142051}
\definecolor{mgreen}{rgb}{0.560181, 0.691569, 0.194885}
\definecolor{mred}{rgb}{0.922526, 0.385626, 0.209179}
\definecolor{mpurple}{rgb}{0.528488, 0.470624, 0.701351}
\newtheorem{theorem}{Theorem}
\newtheorem{prop}{Proposition}
\newtheorem{lemma}{Lemma}
\newtheorem{corollary}{Corollary}
\begin{document}

\title{\large {\bf On privacy amplification, lossy compression, and their duality to channel coding}}%On privacy amplification and its use as a primitive for channel simulation and lossy source coding}}
\author{
{\normalsize 
\href{http://orcid.org/0000-0003-2302-8025}{\color{black} Joseph M.\ Renes}}\\
{\small \emph{Institute for Theoretical Physics, ETH Z\"urich}}\\
{\small \emph{Department of Physics, University of Zürich}}
%{\small \emph{Z\"urich, Switzerland}}
}
% \emph{
% {\small Institute for Theoretical Physics, ETH Z\"urich,} \\
% {\small Department of Physics, University of Zürich,}\\
% {\small Zürich, Switzerland}
% }
% }

\date{\vspace{-\baselineskip}}

\maketitle

\begin{abstract}
We examine the task of privacy amplification from information-theoretic and coding-theoretic points of view.
In the former, we give a one-shot characterization of the optimal rate of privacy amplification against classical adversaries in terms of the optimal type-II error in asymmetric hypothesis testing. 
This formulation can be easily computed to give finite-blocklength bounds and turns out to be equivalent to smooth min-entropy bounds by Renner and Wolf [\href{http://dx.doi.org/10.1007/11593447_11}{Asiacrypt 2005}] and Watanabe and Hayashi [\href{http://dx.doi.org/10.1109/ISIT.2013.6620720}{ISIT 2013}], as well as  %;\href{http://arxiv.org/abs/1211.5252}{arXiv:1211.5252 [cs.IT]}] 
 a bound in terms of the $E_\gamma$ divergence by Yang, Schaefer, and Poor [\href{http://arxiv.org/abs/1706.03866}{arXiv:1706.03866 [cs.IT]}]. 
In the latter, we show that protocols for privacy amplification based on linear codes  can be easily repurposed for channel simulation. 
Combined with known relations between channel simulation and lossy source coding, this implies that privacy amplification can be understood as a basic primitive for both channel simulation and lossy compression. 
Applied to symmetric channels or lossy compression settings, our construction leads to protocols of optimal rate in the asymptotic i.i.d.\ limit.
Finally, appealing to the notion of channel duality recently detailed by us in [\href{http://dx.doi.org/10.1109/TIT.2017.2754921}{IEEE Trans.\ Info.\ Theory {\bf 64}, 577 (2018)}], we show that linear error-correcting codes for symmetric channels with quantum output can be transformed into linear lossy source coding schemes for classical variables arising from the dual channel. 
This explains a ``curious duality'' in these problems for the (self-dual) erasure channel observed by Martinian and Yedidia [Allerton 2003; \href{http://arxiv.org/abs/cs/0408008}{arXiv:cs/0408008}] and partly anticipates recent results on optimal lossy compression by polar and low-density generator matrix codes.
\end{abstract}

\section{Introduction}
Packing and covering are at the core of most simple information processing primitives. 
In noisy channel coding, for instance, where inputs lead to probability distributions over output symbols, one would like to pack as many of these distributions into the space of all possible output distributions such that no two of them overlap significantly. 
This gives an error-correcting code, as the associated inputs can be reliably inferred from the channel output. 
Covering is in some sense dual to packing, as the goal is to find a set of distributions whose empirical average approximates (``covers'') a target distribution. 
In channel simulation, for instance, we would like to approximate the channel output for a given input with the minimal possible amount of additional randomness. 

In this paper we examine the simple covering task of privacy amplification, also known as randomness extraction, both from an information-theoretic as well as a coding-theoretic point of view. 
The goal of privacy amplification, originally introduced in \cite{bennett_privacy_1988}, is to deterministically transform a given random variable $Y$, which may be correlated with $Z$, into the largest possible new random variable $V$ which is uniformly-distributed and independent of $Z$. 
Regarding $Z$ as information held by an adversary or eavesdropper Eve and $Y$ as the variable each held by Alice and Bob, privacy amplification can be understood as a means of extracting a random secret key from information partially correlated with the adversary. 

The natural information-theoretic question is how much randomness can be extracted, and the answer depends on the setting. 
In cryptography, one is interested in making as few assumptions on the correlations to the eavesdropper as possible and usually considers constraints formulated in terms of the min-entropy the adversary has about $Y$, which is related to the maximal probability of guessing $Y$~\cite{impagliazzo_pseudo-random_1989,nisan_randomness_1996}. 
One can also consider adversaries holding quantum information, as opposed to classical, information, but this will not be our focus. 

Instead, we will consider the setting where the complete distribution $P_{YZ}$ is known and $Z$ is classical.
In \S\ref{sec:PA} we give upper and lower bounds on the optimal rate of privacy amplification in a one-shot setting that are formulated in terms of asymmetric hypothesis testing. 
In particular, the minimal type-II error of discriminating between the actual distribution $P_{YZ}$ and an uncorrelated distribution $R_Y{\times} Q_Z$ plays an important role (see Theorem~\ref{thm:PA}), where $R_Y$ is the uniform distribution and $Q_Z$ is arbitrary. 
Previous work in \cite{bennett_privacy_1988,bennett_generalized_1995,renner_simple_2005,hayashi_tight_2013,watanabe_non-asymptotic_2013,hayashi_uniform_2016} is based the smooth min-entropy and its relaxations, though see also \cite{yassaee_non-asymptotic_2013} and the very recent \cite{yang_wiretap_2017}. 
Our converse bound is reminiscient of the metaconverse in channel coding~\cite{nagaoka_strong_2001,polyanskiy_channel_2010-1}\cite[Lemma 4.7]{hayashi_quantum_2017} not only in appearance, but also because it leads to tight, computationally-tractable bounds at finite blocklengths. 
The converse turns out to be equivalent to both the smooth min-entropy bound of Renner and Wolf, Theorem 1 of \cite{renner_simple_2005}, and to the recently formulated $E_\gamma$ bound of Yang, Schaefer, and Poor, Lemma 5 of \cite{yang_wiretap_2017}. 
Moreover, whenever the converse is nontrivial in that the bound on the optimal key size is smaller than the size of the input alphabet, then the converse is also equivalent to the smooth min-entropy bound of Watanabe and Hayashi, Theorem 1 of \cite{watanabe_non-asymptotic_2013}. 
Thus, ultimately we do not need to relax the smooth min-entropy bounds to obtain good finite blocklength bounds. % the tight finite blocklength bounds from the smooth entropy while avoiding the numerical difficulties with the smoothing itself. 
%the min-entropy can be avoided, yet tight finite-blocklength bounds can nevertheless be obtained. 
% makes a substantial improvement to a bound formulated using the information spectrum \cite[Theorem 3]{watanabe_non-asymptotic_2013}, and is a refinement of a \cite[Lemma 29]{hayashi_uniform_2016}.
% It turns out to be equivalent to bound recently formulated by Yang, Schaefer, and Poor~\cite[Lemma 5]{yang_wiretap_2017}. 

Turning to coding theory, in \S\ref{sec:coding} we show that privacy amplification can be used as a primitive to construct protocols for channel simulation and lossy compression. 
First we show that privacy amplification based on linear functions can be used for simulating the action of a given channel $W:X\to Y$ on a given input $X$, such that the simulation error in the latter is precisely equal to to the security parameter in the former. 
The idea behind the construction, stated in detail in Proposition~\ref{prop:pa2sim}, is to consider privacy amplification of the channel \emph{output} $Y$ relative to the \emph{input} $X$. 
If we extend the function from $Y$ to $V$ to be reversible, say $g:Y\to (T,V)$, then only $T$ needs to be transmitted from the encoder to the decoder in order to reconstruct $Y$ by applying $g^{-1}$ to $T$ and common randomness $V$. 
By considering linear functions, we can immediately infer the size of $T$ to be $|Y|/|V|$. 
For symmetric channels, this is sufficient to achieve the optimal rate of communication required for the simulation task, provided the amount of common randomness available to the encoder and decoder is large enough.  

As shown in \cite{steinberg_simulation_1996,winter_compression_2002}, simulating the optimal channel in the rate-distortion function gives a means of turning channel simulation into lossy compression. 
Hence, privacy amplification can also be used to perform lossy compression, the precise details of which are stated in Corollary~\ref{cor:pa2lossy}. 
For sources and distortion functions symmetric in a certain sense, such as the canonical example of compressing a uniformly-random input and considering Hamming distortion, our construction achieves the rate-distortion bound.

Finally, \S\ref{sec:duality} shows that lossy compression can be accomplished by repurposing a good error-correcting code for an appropriate ``dual channel'' as recently investigated by us in \cite{renes_duality_2018}. 
In particular, suppose that $X$ is a random variable to be compressed and reconstructed as $X'$ according to a given distortion measure, and $W=P_{X|X'}$ is the optimal channel in the associated rate-distortion function. 
Then we show that if a code $C$ is good for the dual channel $W^\perp$, then there exists a similarly good lossy compression scheme for $X$, where the reconstructed $X'$ are based on codewords of $C^\perp$ (see Corollary~\ref{cor:duality} for precise details). 
While it happens that the encoder of the channel code and the decompressor are related, the lossy encoder is unrelated to the channel decoder.  
Hence, no guarantees can be made on the efficiency of the lossy compressor even if efficient channel decoding is known to be possible. 

The dual channel usually has a quantum output, but one important exception is the erasure channel.  
In this case, for a channel with erasure probability $q$, the associated lossy compression problem is precisely the binary erasure quantization considered by Martinian and Yedidia~\cite{martinian_iterative_2003}, who established that in this case channel codes can be converted to lossy source codes and \emph{vice versa}.
Thus, we can understand the forward implication as resulting from the deeper structure of duality of codes and channels. 
Establishing this more general relation precisely is one of the main goals of this paper.

\section{Mathematical setup}
We shall only consider random variables with a finite alphabet and will treat their associated probability distributions (probability mass functions) as vectors.
For a random variable $X$ with alphabet $\mathcal X$, we denote the probability mass function as $P_X$ and consider it to be an element of $\mathbb R^{|\mathcal X|}$.
Joint distributions are labelled by all the relevant random variables, and $R$ denotes the uniform distribution. 
Product distributions are denoted by $\times$, e.g.\ $P{\times}Q$, which corresponds to the tensor product at the level of the vector representation.

Events and observables can also be treated as elements of $\mathbb R^{|\mathcal X|}$, and in particular the set of tests will be important for our purposes. 
These are simply vectors whose entries lie in the interval $[0,1]$. 
For a test $\Lambda$ and probability $P_X$, the probability of the test itself will be denoted $\langle \Lambda,P_X\rangle$, which denotes the Euclidean inner product.

We shall also have occasion to consider quantum states and tests, and this notation can also be employed in the quantum setting. 
Probability distributions $P$ are replaced by density operators $\rho$, positive operators of unit trace on $\mathbb C^{|\mathcal X|}$, tests by positive operators on the same space whose eigenvalues do not exceed unity, and the inner product by the Hilbert-Schmidt inner product $\langle \Lambda,\rho\rangle=\tr[\Lambda\rho]$. 

The variational distance of two distributions $P$ and $Q$ is defined by $\delta(P,Q):=\max_{0\leq \Lambda\leq \id}\langle \Lambda,P-Q\rangle$, where $\id$ denotes the vector of all ones. 
From this definition it is immediate that $\delta(P,Q)$ satisfies the triangle and data processing inequalities, and it is easy to show that $\delta(P,Q)=\tfrac12\|P-Q\|_1$. 

Given two distributions, the set of all pairs of probabilities achievable by all possible tests forms the testing region $\mathcal R(P,Q)$ in the unit square. 
We shall make use of the lower boundary of this region, given by the function
\begin{align}
\label{eq:beta}
\beta_\alpha(P,Q):=\min_\Lambda\{\langle\Lambda,Q\rangle:\langle\Lambda,P\rangle\geq \alpha,0\leq\Lambda\leq \id\}\,.
\end{align}
This can be interpreted as the minimal type-II error in an asymmetric hypothesis test between $P$ and $Q$, when the type-I error is constrained to be smaller than $1-\alpha$. 
From the definition it is immediate that $\beta_\alpha$ satisfies the data processing inequality. 
That is, for any stochastic map (channel) $W$, $\beta_\alpha(P,Q)\leq \beta_\alpha(W(P),W(Q))$, since the optimal test $\Lambda^\star$ for $\beta_\alpha(W(P),W(Q))$ induces a feasible test $\Lambda'$ for $\beta_{\alpha}(P,Q)$ by $\langle \Lambda',P\rangle=\langle \Lambda^\star,W(P)\rangle$.

The optimization in \ref{eq:beta} is a linear program (see, e.g.\ \cite{vanderbei_linear_2013}), whose dual formulation is  
\begin{align}
\label{eq:dual}
\beta_\alpha(P,Q)= \max_{\mu,S}\{\mu \alpha-\langle \id, S\rangle:\mu P-S\leq Q,\mu\geq 0,S\geq 0\}\,.
\end{align}
Complementary slackness conditions for the primal and dual programs lead to the well-known Neyman-Pearson lemma~\cite{neyman_problem_1933} that the optimal test $\Lambda$ satisfies $\Lambda(x)=1$ for $\mu P(x)>Q(x)$, $\Lambda(x)=0$ for $\mu P(x)<Q(x)$, and the values of $\Lambda(x)$ for $x$ with $\mu P(x)=Q(x)$ are chosen so that the type-I error is $1-\alpha$. 
Here $\mu$, the optimal value in the dual, is the cutoff (inverse) likelihood ratio for deciding between $P$ and $Q$; clearly the optimal $S$ is just $S(x)=\max\{\mu P(x)-Q(x),0\}$.  
Thus, the region $\mathcal R(P,Q)$ is the convex hull of the points $(\alpha_k,\beta_{\alpha_k})$ obtained by tests of the form $\Lambda(x)=1$ for $x$ such that $P(x)\geq \gamma Q(x)$ for some $\gamma\geq 0$, and zero otherwise. 

The quantity $\beta_\alpha(P,Q)$ is equivalent to the divergence $E_\gamma(P,Q):=P[\frac{P(x)}{Q(x)}\geq \gamma]-\gamma Q[\frac{P(x)}{Q(x)}\geq \gamma]$ in that 
\begin{align}
\label{eq:egammabetaalpha}
E_\gamma(P,Q)=\alpha(\gamma)-\gamma\beta_{\alpha(\gamma)}(P,Q)\,,
\end{align}
where $\alpha(\gamma)=P[\frac{P(x)}{Q(x)}\geq \gamma]$~\cite[Theorem 21]{polyanskiy_channel_2010}. 
To see this, note that \eqref{eq:beta} implies $\beta_{\alpha(\gamma)}(P,Q)\leq Q[\frac{P(x)}{Q(x)}\geq \gamma]$, and therefore $E_\gamma$ is upper-bounded by the righthand side of \eqref{eq:egammabetaalpha}. 
On the other hand, with $\mu=1/\gamma$ in \eqref{eq:dual}, it follows that $\gamma\beta_{\alpha(\gamma)}\geq \alpha(\gamma)-\sum_{x:P(x)\geq \gamma Q(x)} P(x)-\gamma Q(x)=\alpha(\gamma)-E_\gamma(P,Q)$.
Rerunning the argument but leaving the optimization over $S$ in the dual implies  that $E_\gamma(P,Q)=\max_\Lambda\langle \Lambda,P\rangle-\gamma\langle \Lambda,Q\rangle$. 
In this context we also mention the bound 
\begin{align}
\label{eq:betaupper}
\beta_\alpha(P,Q)\leq \frac\alpha\gamma\,,
\end{align} 
which holds for $\gamma$ such that $P[\frac{P(x)}{Q(x)}\geq \gamma]\geq \alpha$ (cf.\ \cite[Equation 2.68]{polyanskiy_channel_2010}). 
To derive it, note that $\Lambda=\id[\frac{P(x)}{Q(x)}\geq \gamma]$ is feasible for $\beta_\alpha(P,Q)$ by assumption, and therefore $\beta_\alpha(P,Q)\leq \langle \Lambda,Q\rangle=\sum_{x:Q(x)\leq P(x)/\gamma} Q(x)\leq \tfrac 1\gamma\sum_{x:Q(x)\leq P(x)/\gamma}P(x)=\alpha/\gamma$.

It is not difficult to see that the variational distance of $P$ and $Q$ is the length of the longest vertical line segment one can place inside above (or below) the diagonal inside the testing region. 
The following proposition gives bounds on $\beta_\alpha(P,Q)$ just in terms of $\alpha$ and $\delta(P,Q)$. 
\begin{lemma}
\label{lem:vdbeta}
For any distributions $P$ and $Q$ and $\alpha\in[0,1]$, 
\begin{align}
 \alpha-\delta(P,Q)\leq \beta_\alpha(P,Q)\leq \alpha\left(1-(1-\alpha)\delta(P,Q)\right)\,.
 \end{align}
\end{lemma}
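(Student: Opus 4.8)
The plan is to work directly with the geometry of the testing region $\mathcal R(P,Q)$ in the unit square, exactly as the paragraph preceding the statement suggests. Recall that $\mathcal R(P,Q)$ is convex, contains the diagonal points $(0,0)$ and $(1,1)$, is symmetric under the point reflection $(\alpha,\beta)\mapsto(1-\alpha,1-\beta)$, and that $\beta_\alpha(P,Q)$ is by definition its lower boundary. The variational distance $\delta:=\delta(P,Q)$ is the length of the longest vertical segment between the lower boundary and the diagonal; equivalently $\delta=\max_\alpha\big(\alpha-\beta_\alpha(P,Q)\big)$, and by the point symmetry this maximum is attained along a segment straddling the point $(\tfrac12,\tfrac12)$, with $\beta_\alpha$ hitting the diagonal again (at the latest) at $\alpha=1$.

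For the \textbf{lower bound} $\beta_\alpha(P,Q)\ge \alpha-\delta$, I would simply invoke the characterization of $\delta$ above: for every $\alpha$ we have $\alpha-\beta_\alpha(P,Q)\le\delta$, which rearranges to the claim. (If one prefers an operational argument: the optimal test $\Lambda$ for $\beta_\alpha(P,Q)$ satisfies $\langle\Lambda,P\rangle\ge\alpha$ and $\langle\Lambda,Q\rangle=\beta_\alpha(P,Q)$, so $\alpha-\beta_\alpha(P,Q)\le\langle\Lambda,P-Q\rangle\le\delta$ by the definition of variational distance.) This direction is immediate and not the obstacle.

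For the \textbf{upper bound} $\beta_\alpha(P,Q)\le\alpha\big(1-(1-\alpha)\delta\big)$, I would exploit convexity of the lower boundary together with two known boundary points. First, the point $(1,1-\delta)$ lies in $\mathcal R(P,Q)$: taking $\alpha=1$ forces $\Lambda=\id$ in \eqref{eq:beta}, but more to the point, by the symmetry and the definition of $\delta$ one has $\beta_1(P,Q)=1-\delta$ — indeed $1-\beta_1 = \max_\Lambda\langle\Lambda,P-Q\rangle=\delta$ when $\alpha=1$ forces $\langle\Lambda,P\rangle=1$ hence $\Lambda=\id$ off the support issues; cleaner is to note $(1,1-\delta)$ is the reflection of $(0,\delta)$... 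I need to be careful here. The robust statement is: the point $(0,0)$ is on the lower boundary, and the point $(\tfrac12,\tfrac12-\tfrac\delta2)$ lies on or above the lower boundary is wrong — rather $(\alpha^\star,\alpha^\star-\delta)$ is \emph{on} the lower boundary for the maximizing $\alpha^\star$. The clean approach: since $\beta_\alpha$ is convex in $\alpha$, lies below the diagonal, passes through $(0,0)$, and since $\delta=\max_\alpha(\alpha-\beta_\alpha)$, the chord from $(0,0)$ to the point where the gap $\delta$ is achieved gives, for $\alpha$ in that range, $\beta_\alpha \le \alpha\cdot\frac{\beta_{\alpha^\star}}{\alpha^\star}$; combining with the mirrored chord on the other side and optimizing over $\alpha^\star$ yields the stated quadratic bound. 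Concretely, I expect the extremal testing region for fixed $\delta$ to be the quadrilateral with vertices $(0,0),(\delta,0),(1,1),(1-\delta,1)$ up to relabeling — its lower boundary is the piecewise-linear function $\beta_\alpha = 0$ for $\alpha\le\delta$ and $\beta_\alpha=\frac{\alpha-\delta}{1-\delta}$ for $\alpha\ge\delta$ — and I would verify that this piecewise-linear function is pointwise $\le \alpha(1-(1-\alpha)\delta)$ and that any region with variational distance $\delta$ has lower boundary $\le$ that of this extremal one, by a support-line/convexity argument.

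The \textbf{main obstacle} is making precise that among all testing regions with $\delta(P,Q)=\delta$ the quadrilateral above is the ``lowest'' one — i.e., that $\beta_\alpha(P,Q)$ is maximized (as a function, pointwise) by the extremal region. This requires showing that any convex, point-symmetric region through $(0,0),(1,1)$ with maximal vertical gap exactly $\delta$ must contain the quadrilateral, equivalently that its lower boundary lies weakly below the quadrilateral's lower boundary; this follows because the lower boundary is convex, equals $0$ nowhere forced below, passes through the origin, and the gap constraint $\alpha-\beta_\alpha\le\delta$ plus convexity pins it between the origin-chord and the gap line. Once that containment is established, evaluating the quadratic inequality $\frac{\alpha-\delta}{1-\delta}\le\alpha-\alpha(1-\alpha)\delta$ for $\alpha\in[\delta,1]$ (and checking $0\le\alpha-\alpha(1-\alpha)\delta$ trivially for $\alpha\in[0,\delta]$) is routine algebra that I would not belabor.
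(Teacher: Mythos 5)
Your lower bound argument is fine and matches the paper's: the optimal test $\Lambda$ for $\beta_\alpha$ satisfies $\langle\Lambda,P-Q\rangle=\alpha-\beta_\alpha(P,Q)\leq\delta(P,Q)$.

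Your upper bound argument has a genuine directional error. You identify the quadrilateral with vertices $(0,0),(\delta,0),(1-\delta,1),(1,1)$, whose lower boundary is $\beta_\alpha=\max\{0,\tfrac{\alpha-\delta}{1-\delta}\}$, as the ``extremal'' region, and then propose to show two things: that this piecewise-linear curve is pointwise at most $\alpha(1-(1-\alpha)\delta)$, and that ``any region with variational distance $\delta$ has lower boundary $\leq$ that of this extremal one.'' The second claim is false, and in fact it is false in precisely the way that breaks the argument. The quadrilateral you describe is the \emph{largest} convex, point-symmetric region through $(0,0),(1,1)$ with vertical gap $\delta$; its lower boundary is therefore the pointwise \emph{minimum} of $\beta_\alpha$ over all such regions, not the maximum. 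Every actual testing region with $\delta(P,Q)=\delta$ has $\beta_\alpha(P,Q)\geq\max\{0,\tfrac{\alpha-\delta}{1-\delta}\}$. Showing that this minimum is below $\alpha(1-(1-\alpha)\delta)$ therefore says nothing about an upper bound on $\beta_\alpha(P,Q)$; the verification you propose is irrelevant to the conclusion.

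The salvageable part of your sketch is the chord inequality you write in passing: by convexity, for $\alpha\leq\alpha^\star$ the chord from $(0,0)$ to the gap point $(\alpha^\star,\alpha^\star-\delta)$ gives $\beta_\alpha\leq\tfrac{\alpha}{\alpha^\star}(\alpha^\star-\delta)=\alpha(1-\tfrac{\delta}{\alpha^\star})$, and for $\alpha\geq\alpha^\star$ the chord from $(\alpha^\star,\alpha^\star-\delta)$ to $(1,1)$ gives $\beta_\alpha\leq\alpha-\tfrac{1-\alpha}{1-\alpha^\star}\delta$. You cannot optimize over $\alpha^\star$ because it is fixed by $(P,Q)$; instead you eliminate it by the worst-case relaxations $1/\alpha^\star\geq 1\geq 1-\alpha$ and $1/(1-\alpha^\star)\geq 1\geq\alpha$, which in both cases yields $\beta_\alpha\leq\alpha(1-(1-\alpha)\delta)$. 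This is exactly the paper's proof, phrased operationally: with $\Gamma$ the optimal test for $\delta(P,Q)$ and $\alpha^\star=\langle\Gamma,P\rangle$, the test $\Lambda=a\id+b\Gamma$ with $a+b\alpha^\star=\alpha$ realizes these chords ($a=0,\,b=\alpha/\alpha^\star$ in the first case; $a=\tfrac{\alpha-\alpha^\star}{1-\alpha^\star},\,b=1-a$ in the second). So the geometric picture you have in mind is the right one, but the extremal-region reasoning must be abandoned and the chord bounds applied directly, keeping careful track of which side of $\alpha^\star$ you are on and noting that $\alpha^\star$ is data-dependent, not a free parameter to minimize a bound over.
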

\begin{proof}
Supposing $\Lambda$ is the optimal test in $\beta_{\alpha}(P,Q)$, it follows immediately that $\delta(P,Q)\geq \langle \Lambda,P-Q\rangle=\alpha-\beta_{\alpha}(P,Q)$.
For the upper bound, let $\Gamma$ be the optimal test in $\delta(P,Q)$ and set $\alpha^\star=\langle \Gamma,P\rangle$. 
By this definition we have $\delta(P,Q)=\langle \Gamma,P-Q\rangle$. 
Next, set $\Lambda=a \id+b\Gamma$ for $a$ and $b$ to be determined later, but such that $\langle \Lambda,P\rangle=\alpha$, i.e.\ $a+b\alpha^\star=\alpha$. 
It then follows that $\beta_\alpha(P,Q)\leq \langle \Lambda,Q\rangle=\alpha-b\delta(P,Q)$.
Now there are two cases to consider, $\alpha$ smaller or larger than $\alpha^\star$.
For the former, the choice $a=0$ and $b=\alpha^\star/\alpha$ ensures that $\Lambda$ is a valid test. 
Then $\beta_\alpha(P,Q)\leq \alpha(1-\tfrac1{\alpha^\star}\delta(P,Q))$, which implies the desired bound since $1/\alpha^\star\geq 1\geq 1-\alpha$.
For the latter, choosing $b=1-a$ and $a=(\alpha-\alpha^\star)/(1-\alpha^\star)$ again leads to a valid test. 
Here we have $\beta_\alpha(P,Q)\leq \alpha(1-\tfrac{1-\alpha}{1-\alpha^\star}\delta(P,Q))$ which implies the desired bound since $\frac1{1-\alpha^\star}\geq 1\geq \alpha$. 
\end{proof}
Note that $\alpha^\star=P[P(x)\geq Q(x)]$ appearing in the proof is precisely the value of $\alpha$ for which the vertical distance between the diagonal and the lower boundary of $\mathcal R(P,Q)$ is the variational distance. 

The lower bound corresponds to the upper bound in \cite[Proposition 3.2]{jensen_generalized_2013}, while the upper bound is due to Fr\'ed\'eric Dupuis. 
We mention in passing that the lower bound claimed in Proposition 3.2 is in error. 
In particular, for $P$ and $Q$ deterministic and completely uniform distributions of a binary-valued random variable, respectively, we have $\delta(P,Q)=\frac 12$ and $\beta_\alpha(P,Q)=\frac\alpha2$. 
This violates the ostensible bound for $\alpha<\nicefrac12$. 
This example also shows the claimed Pinsker-like inequality $\frac{1-\alpha}{\alpha}\delta(P,Q)\leq -\log \frac1\alpha\beta_\alpha(P,Q)$ is violated for $\alpha<\nicefrac 13$. 

For a joint distribution $P_{YZ}$, the min-entropy of $Y$ conditioned on $Z$ is related to the largest conditional probability $P_{Y|Z=z}(y)$, maximized over possible values of $Z$:\footnote{Note that another oft-used definition of the min-entropy is based on averaging over $P_Z$.} 
\begin{align}
\label{eq:minentropy}
H_{\min}(Y|Z)_P:=-\log \max_{y\in \mathcal Y,z\in \text{supp}(P_Z)}P_{Y|Z=z}(y)\,.
\end{align}
Following \cite{watanabe_non-asymptotic_2013}, its smoothed version is based on replacing the joint distribution, but not the marginal $P_Z$, with a nearby distribution $Q_{YZ}$ that decreases the ratio:\footnote{Note that the smoothing in \cite{renner_simple_2005} is different: $Q_{YZ}$ is required to be smaller than $P_{YZ}$ but have a normalization not less than $1-\eps$.}
\begin{align}
H_{\min}^\eps(Y|Z)_P:=-\log \min_{Q:\delta(P,Q)\leq \eps}\max_{y\in \mathcal Y,z\in \text{supp}(P_Z)}\frac{Q_{YZ}(y,z)}{P_Z(z)}\,.
\end{align}
Using the dual form of $\delta(P,Q)$ as $\min\{\langle \id, T\rangle:T\geq P-Q,T\geq 0\}$, the smoothed min-entropy can be expressed as a linear program:
\begin{align}
2^{-H_{\min}^\eps(Y|Z)_P}:=
% \begin{array}[t]{rl} \underset{\lambda,Q_{YZ},T_{YZ}}{\mathrm{minimum}} & \lambda\\
% \text{such that} & \lambda \id_{Y}P_Z\geq Q_{YZ}\\
% & T_{YZ}\geq P_{YZ}-Q_{YZ}\\
% &\tr [T_{YZ}]\leq \eps\\
% &\tr[Q_{YZ}]=1\\
% & \lambda,Q_{YZ},T_{YZ}\geq 0\,.
%\end{array}
\min\{\lambda:\lambda \id_YP_Z\geq Q_{YZ},T_{YZ}\geq P_{YZ}-Q_{YZ},\langle \id_{YZ},T_{YZ}\rangle\leq \eps,\langle \id_{YZ},Q_{YZ}\rangle=1;\lambda,T,Q\geq 0\}\,.
\end{align}

\section{Bounds on extractable randomness}
\label{sec:PA}
Given a joint distribution $P_{YZ}$, the task of randomness extraction, or privacy amplification, of $Y$ relative to $Z$ is to apply a function $f:Y\to V$ such that the resulting distribution $P_{VZ}$ is essentially the same as $R_V{\times} P_Z$. 
This setup is depicted in Figure~\ref{fig:PA}. 
We sometimes refer to $f$ as the extractor or extractor function, though note that in the cryptography community an extractor refers to a set of functions useful for generating randomness from a source characterized only in terms of min-entropy. 
We measure closeness by the variational distance, and say that $f$ is a protocol for $(k,\eps)$ privacy amplification for $P_{YZ}$ when $\delta(P_{VZ},R_V{\times}P_Z)=\eps$ and $\log |V|=k$.

\begin{figure}[h]
\centering
\includegraphics{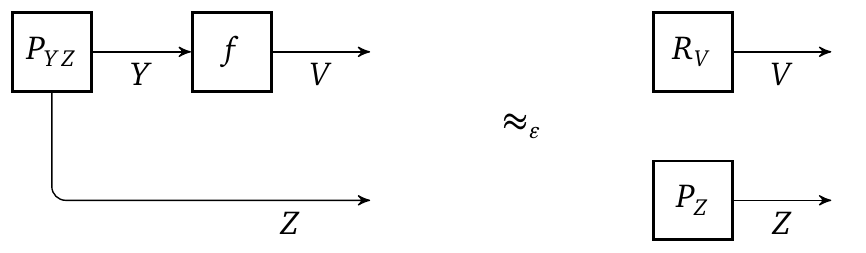}
\caption{\label{fig:PA}Schematic representation of privacy amplification (randomness extraction) of $Y$ relative to $Z$. 
The function $f$ should produce a random variable $V$ which is $\eps$-close to being uniformly random and independent of $Z$, as measured by the variational distance. 
}
\end{figure}

Letting $K_\eps(Y|Z)_P$ be the largest $K\in \mathbb N$ such that there exists a $(\log K,\eps)$ privacy amplification protocol for $P_{YZ}$, we can show the following result. 
\begin{theorem}
\label{thm:PA}
For any joint distribution $P_{YZ}$, 
\begin{align}
K_\eps(Y|Z)_P&\leq \min_{\eta\in [0,1-\eps]}\tfrac1\eta \beta_{\eps+\eta}(P_{YZ},\id_Y{\times} P_Z)\qquad\text{and}\label{eq:converse}\\
K_\eps(Y|Z)_P&\geq \max_{\eta\in[0,\eps]}\Big\lfloor\tfrac{4\eta^2}{\eps-\eta}\max_{Q_Z}\beta_{\eps-\eta}(P_{YZ},\id_Y{\times}\, Q_Z)\Big\rfloor\,.\label{eq:forward}
\end{align} 
\end{theorem}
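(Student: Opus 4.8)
The plan is to prove the two bounds separately, in each case connecting the privacy-amplification task to a hypothesis test between $P_{YZ}$ and a product distribution $\id_Y\times Q_Z$ (with $Q_Z$ the uniform-rate choice $P_Z$ for the converse, arbitrary for the achievability).

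\emph{Converse \eqref{eq:converse}.} Suppose $f:Y\to V$ achieves $(\log K,\eps)$ privacy amplification, so $\delta(P_{VZ},R_V\times P_Z)=\eps$ with $|V|=K$. The idea is to build a feasible test for $\beta_{\eps+\eta}(P_{YZ},\id_Y\times P_Z)$ out of the extractor. First apply data processing for $\beta_\alpha$ under the channel $f\otimes\mathrm{id}_Z$: this maps $P_{YZ}\mapsto P_{VZ}$ and $\id_Y\times P_Z\mapsto K\,(R_V\times P_Z)$, since $f$ sends the all-ones vector $\id_Y$ to $|V|=K$ times the uniform vector on $V$. Hence $\beta_\alpha(P_{YZ},\id_Y\times P_Z)\le \beta_\alpha(P_{VZ},K\,R_V\times P_Z)=K\,\beta_\alpha(P_{VZ},R_V\times P_Z)$, using positive homogeneity of $\beta$ in its second argument. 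So it suffices to show $\beta_{\eps+\eta}(P_{VZ},R_V\times P_Z)\ge \eta$ for every $\eta\in[0,1-\eps]$. But for any test $\Lambda$ on $VZ$ one has $\langle\Lambda,P_{VZ}\rangle\le \langle\Lambda,R_V\times P_Z\rangle+\delta(P_{VZ},R_V\times P_Z)=\langle\Lambda,R_V\times P_Z\rangle+\eps$, i.e.\ any test with type-I success $\ge\eps+\eta$ has $\langle\Lambda,R_V\times P_Z\rangle\ge\eta$, which is exactly the claimed lower bound on $\beta_{\eps+\eta}$. Combining, $K\ge \tfrac1\eta\beta_{\eps+\eta}(P_{YZ},\id_Y\times P_Z)$; minimizing over $\eta$ gives \eqref{eq:converse}. (Essentially Lemma~\ref{lem:vdbeta}'s lower bound applied to $P_{VZ}$ versus $R_V\times P_Z$.)

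\emph{Achievability \eqref{eq:forward}.} Here the plan is the standard leftover-hash / random-function argument, but organized so the output is governed by $\beta$ rather than smooth min-entropy. Fix $Q_Z$ and $\eta\in[0,\eps]$, set $K=\big\lfloor \tfrac{4\eta^2}{\eps-\eta}\max_{Q_Z}\beta_{\eps-\eta}(P_{YZ},\id_Y\times Q_Z)\big\rfloor$ (the $\max$ achieved by the fixed $Q_Z$), and choose $f$ from a two-universal family of hash functions $Y\to V$ with $|V|=K$. The key quantitative input is a one-shot leftover-hash lemma bounding the expected distance $\mathbb E_f\,\delta(P_{VZ},R_V\times P_Z)$ by (roughly) $\tfrac12\sqrt{K\cdot 2^{-H_{\min}^{\eps'}(Y|Z)}}+\eps'$ for a smoothing parameter $\eps'$; one then rewrites $2^{-H_{\min}^{\eps'}(Y|Z)}$, via the linear-program characterization in the setup and the definition of $\beta$, in terms of $\beta_{1-\eps'}(P_{YZ},\id_Y\times Q_Z)$ — indeed $2^{-H_{\min}^{\eps'}(Y|Z)_P}$ is the optimal value of a program whose feasible set, after smoothing by an $\eps'$-ball, is governed by exactly $\min_\Lambda\{\langle\Lambda,\id_Y\times P_Z\rangle:\langle\Lambda,P_{YZ}\rangle\ge 1-\eps'\}$. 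Matching $\eps'=\eps-\eta$ and bookkeeping the factors $4\eta^2$ and $\eps-\eta$ so that the $\tfrac12\sqrt{\cdot}$ term is $\le\eta$ and the additive $\eps'$ term is $\le\eps-\eta$ yields $\mathbb E_f\,\delta(P_{VZ},R_V\times P_Z)\le\eps$, hence some $f$ in the family does at least this well. The floor accounts for $K$ being an integer, and taking the max over $\eta$ and $Q_Z$ gives \eqref{eq:forward}.

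\emph{Main obstacle.} The converse is short and essentially a repackaging of Lemma~\ref{lem:vdbeta}; the delicate part is the achievability, specifically getting the constants exactly right. One must choose the smoothing parameter and split $\eps=\eta+(\eps-\eta)$ so that the square-root (hashing) error and the smoothing error both fit under budget, and one must argue the clean identification of the smooth-min-entropy LP with $\max_{Q_Z}\beta_{\eps-\eta}(P_{YZ},\id_Y\times Q_Z)$ — in particular that optimizing $Q_Z$ in the smoothing is the same as optimizing the second argument of $\beta$. I expect the bulk of the work to be verifying that this translation is lossless (no slack beyond what the floor absorbs) and that the factor $4\eta^2/(\eps-\eta)$ is precisely what the leftover-hash bound produces, rather than a looser constant.
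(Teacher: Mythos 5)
Your converse argument contains a genuine error: the data processing inequality points the wrong way for your conclusion. Data processing gives $\beta_\alpha(P,Q)\leq \beta_\alpha(W(P),W(Q))$, so applying $f$ yields $\beta_\alpha(P_{YZ},\id_Y{\times}P_Z)\leq \beta_\alpha(P_{VZ},f(\id_Y){\times}P_Z)$, which is an \emph{upper} bound on $\beta_\alpha(P_{YZ},\id_Y{\times}P_Z)$. You combine this with the \emph{lower} bound $\beta_{\eps+\eta}(P_{VZ},R_V{\times}P_Z)\geq\eta$ from Lemma~\ref{lem:vdbeta}, but the two inequalities face the same direction and cannot be chained to conclude $\beta_{\eps+\eta}(P_{YZ},\id_Y{\times}P_Z)\geq\eta K$. (There is also a bookkeeping slip: for a balanced $f$ the all-ones vector $\id_Y$ maps to $\tfrac{|Y|}{K}\id_V=|Y|R_V$, not $K R_V$; and for a general extractor $f$ need not be balanced at all.) What the converse actually requires is the \emph{reverse} inequality $\beta_\alpha(P_{VZ},\id_V{\times}P_Z)\leq\beta_\alpha(P_{YZ},\id_Y{\times}P_Z)$. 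This is the content of the paper's Lemma~\ref{lem:function}, and it does not follow from data processing applied to $f$; instead one must construct the \emph{backward} stochastic map $W:(V,Z)\to(Y,Z)$ built from the conditional distributions $P_{Y|Z=z}$, verify $W(P_{VZ})=P_{YZ}$, apply data processing to $W$, and finally use the pointwise bound $W(\id_V{\times}P_Z)\leq \id_Y{\times}P_Z$. This backward-channel construction is the key idea your proposal is missing, and without it the converse does not go through.

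On the achievability, your plan routes through a leftover-hash lemma stated in terms of smooth min-entropy and then translates to $\beta_\alpha$; the paper instead works with $\beta$ directly, never invoking smooth min-entropy in the achievability proof. Concretely, the paper fixes $\Lambda_{YZ}$ to be the optimal Neyman--Pearson test for $\beta_\eta(P_{YZ},\id_Y{\times}Q_Z)$, splits $P_{VZ}$ into a trimmed part and a leftover using that test, applies H\"older with weights $Q_Z(z)/|V|$ and two-universal hashing to bound the trimmed part, and uses complementary slackness (in the form $\Lambda'(y,z)P_{YZ}(y,z)Q_Z(z)^{-1}\leq\tfrac1\mu\Lambda'(y,z)$) to land exactly on $\beta_\eta$. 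The $Q_Z$ you invoke enters as the H\"older weight, not as a smoothing variable in $H_{\min}^\eps$; in the paper's definition of $H_{\min}^\eps(Y|Z)$ the smoothing optimizes $Q_{YZ}$ with the fixed marginal $P_Z$ in the denominator, so there is no clean $\max_{Q_Z}$ there. Your "translation is lossless" step is exactly where the work lies, and you leave it unverified; absent the test-splitting and complementary-slackness trick it is not clear the constants $4\eta^2/(\eps-\eta)$ emerge without additional slack.
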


We will show the second inequality, the achievability statement, by employing two-universal hashing~\cite{carter_universal_1979}. Thus, linear functions are capable of achieving the stated bounds. 
The argument is a combination of the argument given for channel resolvability by Hayashi for channels with classical~\cite[Lemma 2]{hayashi_general_2006} and quantum~\cite[Lemma 9.2]{hayashi_quantum_2017} output with that of the leftover hashing lemma of~\cite{tomamichel_leftover_2011}, adapted to yield a bound involving $\beta_\alpha$.
The first inequality, the converse, is an adaptation of the converse involving min-entropy reported in~\cite[\S III.A]{tomamichel_hierarchy_2013}, itself based on \cite[\S 8.2.2]{tomamichel_framework_2012}. 
Both the underlying achievability and converse arguments also apply when the adversary holds quantum information (i.e.\ $Z$ is quantum), but this is decidedly not the case for the hypothesis testing versions. 
We shall remark on the steps that fail for quantum side information.

\subsection{Proof of the converse}
Let us begin by showing the converse.
It relies on the following lemma, the hypothesis-testing version of the statement that min-entropy of $Y$ conditioned on $Z$  cannot increase by the application of a function to $Y$. 
\begin{lemma}
\label{lem:function}
For any function $f:Y\to V$, $\beta_\alpha(P_{VZ},\id_V{\times} P_Z)\leq \beta_\alpha(P_{YZ},\id_Y{\times} P_Z)$ for all $\alpha\in [0,1]$. 
\end{lemma}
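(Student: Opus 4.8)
The plan is to transport an optimal test for the right-hand side through $f$ by a fiberwise maximum, producing a feasible test for the left-hand side whose objective is no larger. First I would fix a test $\Lambda$ attaining $\beta_\alpha(P_{YZ},\id_Y{\times}P_Z)$, so that $0\le\Lambda\le\id$, $\langle\Lambda,P_{YZ}\rangle\ge\alpha$, and $\langle\Lambda,\id_Y{\times}P_Z\rangle=\beta_\alpha(P_{YZ},\id_Y{\times}P_Z)$ (the minimum is attained, the feasible region being a nonempty compact set and the objective linear). Then I would define a test on $\mathcal V\times\mathcal Z$ by $\Lambda'(v,z):=\max_{y\in f^{-1}(v)}\Lambda(y,z)$, setting $\Lambda'(v,z):=0$ on any empty fiber. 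Since $\Lambda$ is $[0,1]$-valued, so is $\Lambda'$, hence it is a legitimate test.

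Next I would check that $\Lambda'$ is feasible for $\beta_\alpha(P_{VZ},\id_V{\times}P_Z)$ and estimate its cost. Feasibility follows from $\Lambda'(v,z)\ge\Lambda(y,z)$ for every $y\in f^{-1}(v)$ together with $P_{VZ}(v,z)=\sum_{y\in f^{-1}(v)}P_{YZ}(y,z)$:
\[
\langle\Lambda',P_{VZ}\rangle=\sum_{v,z}\Lambda'(v,z)\sum_{y\in f^{-1}(v)}P_{YZ}(y,z)\ \ge\ \sum_{y,z}\Lambda(y,z)P_{YZ}(y,z)=\langle\Lambda,P_{YZ}\rangle\ge\alpha.
\]
For the cost, the pointwise inequality $\max_y a_y\le\sum_y a_y$ for nonnegative reals, applied fiber by fiber, together with the fact that the fibers partition $\mathcal Y$, gives
\[
\langle\Lambda',\id_V{\times}P_Z\rangle=\sum_{v,z}\Lambda'(v,z)P_Z(z)\ \le\ \sum_{v,z}\sum_{y\in f^{-1}(v)}\Lambda(y,z)P_Z(z)=\langle\Lambda,\id_Y{\times}P_Z\rangle.
\]
Combining the two displays yields $\beta_\alpha(P_{VZ},\id_V{\times}P_Z)\le\langle\Lambda',\id_V{\times}P_Z\rangle\le\beta_\alpha(P_{YZ},\id_Y{\times}P_Z)$, which is the claim.

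I expect the only delicate point to be the choice of pushforward, and it is also the reason a one-line argument is unavailable. A direct appeal to the data-processing inequality for $\beta_\alpha$ runs the wrong way here, since transporting $\id_Y{\times}P_Z$ through $(y,z)\mapsto(f(y),z)$ gives $(|f^{-1}(v)|)_v{\times}P_Z$ rather than $\id_V{\times}P_Z$. Summing $\Lambda$ over each fiber would reproduce the reference cost exactly, but could make $\Lambda'$ exceed $\id$ and hence fail to be a test; taking the fiberwise maximum preserves both $\Lambda'\le\id$ and the domination $\Lambda'(v,z)\ge\Lambda(y,z)$ needed for the type-I constraint, at the harmless price of turning the cost equality into the required inequality. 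This maximum-over-a-fiber step is precisely where the proof uses that $f$ is a deterministic function and $Z$ is classical: with quantum side information the operators $\{\Lambda(y,\cdot)\}_{y\in f^{-1}(v)}$ generally do not commute and admit no well-defined maximum, which is one of the places the hypothesis-testing formulation breaks down.
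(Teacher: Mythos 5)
Your proof is correct, and it takes a genuinely different route from the paper's. The paper constructs a reverse stochastic kernel $W$ from $(V,Z)$ back to $(Y,Z)$ built from the conditional distributions $P_{Y|Z=z}$, invokes the data-processing inequality for $\beta_\alpha$ to pass from $P_{VZ}$ to $P_{YZ}$, and then exploits the pointwise domination $W(\id_V{\times}P_Z)\le \id_Y{\times}P_Z$ (i.e.\ monotonicity of $\beta_\alpha$ in its second argument). Your argument instead pushes the optimal test forward through $f$ by a fiberwise maximum $\Lambda'(v,z)=\max_{y\in f^{-1}(v)}\Lambda(y,z)$ and checks feasibility and cost directly. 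In effect the paper pulls back a test by the \emph{weighted average} over each fiber induced by $W$, while you push forward by the \emph{fiber maximum}; both yield a $[0,1]$-valued test dominating the type-I statistic, but yours dispenses with the reverse channel and data processing entirely, which makes it shorter and more self-contained. Both arguments hinge on $Z$ being classical, though for different reasons: the paper needs the conditional laws $P_{Y|Z=z}$ to build $W$, whereas you need a well-defined pointwise maximum of $\{\Lambda(y,\cdot)\}_{y\in f^{-1}(v)}$, which is unavailable when these are noncommuting operators. Your closing remark explaining why a naive appeal to data processing in the forward direction fails (it transports $\id_Y{\times}P_Z$ to $(|f^{-1}(v)|)_v{\times}P_Z$, not $\id_V{\times}P_Z$) is a worthwhile observation and correctly identifies why some additional idea is needed.
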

\begin{proof}
Using the conditional distributions $P_{Y|Z=z}$ we can construct a stochastic map $W$ from $P_{VZ}$ back to $P_{YZ}$:
\begin{align}
W(y,z|v,z')=\delta_{z,z'}\frac{\delta_{f(y),v}P_{Y|Z=z}(y)}{\sum_{y'}\delta_{f(y'),v}P_{Y|Z=z}(y')}\,.
\end{align}
Clearly $W$ is stochastic and $W(P_{VZ})=P_{YZ}$. 
By the data processing inequality we have $\beta_\alpha(P_{VZ},\id_V{\times} P_Z)\leq \beta_\alpha(P_{YZ},W(\id_V{\times} P_Z))$. 
Now observe that 
\begin{subequations}
\begin{align}
[W(\id_V{\times} P_Z)](y,z)
&=\sum_{v,z'}W(y,z|v,z')P_Z(z')\\
%&=\frac{\delta_{f(y),v}P_{X|Z=z}(x)}{\sum_{x'}\delta_{f(x'),y}P_{X|Z=z}(x')}P_Z(z')\\
&=P_Z(z)\frac{P_{Y|Z=z}(y)}{\sum_{y'}\delta_{f(y),f(y')}P_{Y|Z=z}(y')}\\
&\leq P_Z(z)\,.
\end{align}
\end{subequations}
Hence $\beta_\alpha(P_{YZ},W(\id_V{\times} P_Z))\leq \beta_\alpha(P_{YZ},\id_Y{\times} P_Z)$, completing the proof.
\end{proof}

%\begin{proof}[Proof of \eqref{eq:converse}]
Now suppose $f:Y\to V$ is the extractor function of a $(\log K_\eps(Y|Z)_P,\eps)$ privacy amplification protocol. 
By Lemma~\ref{lem:vdbeta}, we have $\beta_{\eps+\eta}(P_{VZ},R_V{\times} P_Z)\geq \eta$ for any $0\leq \eta\leq 1-\eps$.
This is equivalent to $\beta_{\eps+\eta}(P_{VZ},\id_V{\times} P_Z)\geq \eta|V|=\eta K_\eps(Y|Z)_P$, whence
Lemma~\ref{lem:function} and a minimization over $\eta$ gives \eqref{eq:converse}.\\

\newenvironment{psmallmatrix}
  {\left(\begin{smallmatrix}}
  {\end{smallmatrix}\right)}

Lemma~\ref{lem:function} does not hold for quantum $Z$, as one can find counterexamples. 
That this might be the case can be anticipated by noticing that in the proof we make use of the distribution of $Y$ conditioned on the value of $Z$, for which there is no quantum analog. 
To state a specific counterexample, consider the qubit density operators $\phi_j=\ketbra{j}$ for $j=0,1$ and $\{\ket{j}\}$ an orthonormal basis, along with $\phi_2=\ketbra{+}$, where $\ket{+}=\tfrac1{\sqrt 2}(\ket{0}+\ket 1)$. 
Suppose $Y$ is uniformly-distributed in $\{0,1,2\}$ and take the state of $Z$ given $Y=y$ to be $\phi_y$; the marginal state of $Z$ averaged over $Y$ is $\bar\phi=\tfrac13(\id+\ketbra{+})$
For the function $f$ mapping 0 and 1 to 1 and 2 to itself, then $V$ has distribution $(\nicefrac 23,\nicefrac13)$ and corresponding conditional states $\tfrac12\id$ and $\varphi_2$. 
In the lower bound of $\beta_\alpha(\rho_{VZ},\id_V{\times}\rho_Z)$ we can choose $\mu=2$ and $S=\frac13\ketbra{{-}}$ so that $\beta_\alpha(\rho_{VZ},\id_V{\times}\rho_Z)\geq 2\alpha-\tfrac13$. 
In the upper bound of  $\beta_\alpha(\rho_{YZ},\id_Y{\times}\rho_Z)$ we can take the test for $Y=y$ to be a scaled projection onto the zero eigenspace of $\tfrac49\phi_y-\bar\phi$ for $Y=0,1$ and zero for $Y=2$. 
In particular, $\Lambda_0=\frac \alpha{6}\begin{psmallmatrix}9 &-3\\ -3& 1\end{psmallmatrix}$ and $\Lambda_1=\frac \alpha{6}\begin{psmallmatrix}1 &-3\\ -3& 9\end{psmallmatrix}$, and $\Lambda_3=0$. 
Then $\langle \Lambda,\rho_{YZ}\rangle=\alpha$ and $\langle \Lambda,\id_Y\otimes \rho_Z\rangle=\frac43\alpha$, meaning $\beta_\alpha(\rho_{YZ},\id_Y{\times}\rho_Z)\leq \frac43\alpha$. 
For $\alpha>\nicefrac12$ this is smaller than $2\alpha-\tfrac13$, so this example shows that the lemma cannot hold for arbitrary $\alpha$. 
It is an open question if it holds for $\alpha<\nicefrac12$. 

The example also shows that the converse bound itself does not hold for quantum $Z$. 
The function $f$ results in an output quantum state $\rho_{VZ}$ with $\eps=\nicefrac16$, meaning $K_{\nicefrac16}(Y|Z)_\rho\geq 2$. 
But if we choose $\eta=\nicefrac12$ so that $\alpha=\eps+\eta=\nicefrac 23$, then from the converse we have $K_{\nicefrac 16}(Y|Z)_\rho\leq \frac1{\eta}\frac 43\alpha=\frac{16} 9$, a contradiction.

\subsection{Proof of achievability}
Now we move to the proof of the direct part.
%\begin{proof}[Proof of \eqref{eq:forward}]
Let $f:Y\to V$ be an arbitrary function with $|V|=K_\eps(Y|Z)_P$, and set $\Delta:=\delta(P_{VZ},R_V\times P_Z)$. 
For arbitrary test $\Lambda_{YZ}$ on $YZ$, define the rescaled probability distributions $\hat P_{VZ}$ and $\bar P_{VZ}$ such that $\hat P_{VZ}+\bar P_{VZ}=P_{VZ}$ and $\hat P_{VZ}(v,z)=\sum_{y:f(y)=v}P_{YZ}(y,z)\Lambda_{YZ}(y,z)$. 
Using the triangle inequality we have
\begin{subequations}
\begin{align}
\Delta
&\leq \delta(P_{VZ},R_V{\times} \bar P_Z)+\delta(R_V{\times} \bar P_Z,R_V{\times} P_Z)\\
&= \delta(\hat P_{VZ}+\bar P_{VZ},R_V{\times} \bar P_Z)+\delta(\bar P_Z, P_Z)\\
&\leq \delta(\bar P_{VZ},R_V{\times} \bar P_Z)+\delta(\bar P_Z, P_Z)+\tfrac12\|\hat P_{VZ}\|_1
\end{align}
\end{subequations}
Due to the form of $\bar P_{VZ}$ and $\hat P_{VZ}$, the latter two terms combine to give $\langle \Lambda_{YZ},P_{YZ}\rangle$, so that
\begin{align}
\label{eq:firstcamp}
\Delta\leq \tfrac12\|\bar P_{VZ}-R_V{\times} \bar P_Z\|_1+\langle \Lambda_{YZ},P_{YZ}\rangle\,.
\end{align}

To bound the first term in this expression, we use the fact that $(\sum_{i} |x_i|)^2\leq \sum_i s_i^{-1}x_i^2$ for any set of $s_i>0$ such that $\sum_i s_i=1$. 
This follows by writing $\|x\|_1=\|\frac x{\sqrt s}\sqrt s\|_1$ and applying the H\"older inequality with $p=q=2$. 
Choosing $s_{vz}=Q_Z(z)/|V|$ for some normalized distribution $Q_Z$ with strictly positive probabilities to be determined later, we have
\begin{subequations}
\label{eq:tdbound}
\begin{align}
\|\bar P_{VZ}-R_V{\times} \bar P_Z\|_1^2
&\leq |V| \sum_{vz}Q_Z(z)^{-1}(\bar P_{VZ}(v,z)-\tfrac1{|V|}\bar P_Z(z))^2\\
%&=|V|\sum_{vz} Q_Z(z)^{-1}\left(\bar P_{VZ}(v,z)^2+\tfrac{1}{|V|^2}\bar P_Z(z)^2-\tfrac2{|V|}\bar P_{VZ}(v,z)\bar P_Z(z)\right)\\
&=|V|\sum_{vz} Q_Z(z)^{-1}\bar P_{VZ}(v,z)^2-\sum_z Q_Z(z)^{-1}\bar P_Z(z)^2\,.
\end{align}
\end{subequations}
Now let us deal with the summation over $v$ in the first term. 
With $\Lambda'(y,z)=1-\Lambda(y,z)$ and omitting the $YZ$ random variable subscripts, we have
\begin{subequations}
\label{eq:commuteex}
\begin{align}
\sum_v \bar P_{VZ}(v,z)^2
&=\sum_v \Big(\sum_{y:f(y)=v}P(y,z)\Lambda'(y,z)\Big)\Big(\sum_{y':f(y')=v}P(y,z)\Lambda'(y,z)\Big)\\
&=\sum_y P(y,z)^2\Lambda'(y,z)^2+\sum_{y\neq y'}\delta_{f(y)=f(y')}P(y,z)P(y',z)\Lambda'(y,z)\Lambda'(y',z)\,.
\end{align}
\end{subequations}
Taking the expectation over $f$ chosen uniformly at random from a family of two-universal hash functions, we then obtain
\begin{subequations}
\begin{align}
\mathbb E_f \sum_v \bar P_{VZ}(v,z)^2 
&=\sum_y P(y,z)^2\Lambda'(y,z)^2+\tfrac1{|V|}\sum_{y\neq y'}P(y,z)P(y',z)\Lambda'(y,z)\Lambda'(y',z)\\
&\leq \sum_y P(y,z)^2\Lambda'(y,z)^2+\tfrac1{|V|}\bar P_Z(z)^2\,.
\end{align}
\end{subequations}
Using this in \eqref{eq:tdbound} gives
\begin{align}
\label{eq:almostthere}
\mathbb E_f \|\bar P_{VZ}-R_Z{\times} \bar P_Z\|_1^2
&\leq |V|\sum_{yz} Q_Z(z)^{-1}P_{YZ}(y,z)^2\Lambda'_{YZ}(y,z)^2\,.
\end{align}

Finally, let $\Lambda_{YZ}$ be the optimal test in $\beta_\eta(P_{YZ},\id_Y\times Q_Z)$, so that $\langle \Lambda_{YZ},P_{YZ}\rangle=\eta$ and $\beta_\eta(P_{YZ},\id_Y\times Q_Z)\leq \mu \eta$ for the optimal $\mu$ in the dual formulation \eqref{eq:dual}. 
By the properties of the optimal test it follows that 
 $\Lambda'_{YZ}(y,z)(\mu P_{YZ}(y,z)-Q_Z(z))\leq 0$. 
Therefore, $\Lambda'_{YZ}(y,z)P_{YZ}(y,z)Q_Z(z)^{-1}\leq \tfrac1\mu\Lambda'_{YZ}(y,z)$.
From \eqref{eq:almostthere} we then obtain
\begin{subequations}
\begin{align}
\mathbb E_f \|\bar P_{VZ}-R_V{\times} \bar P_Z\|_1^2
&\leq\frac{|V|}\mu \sum_{yz} P_{YZ}(y,z)\Lambda'_{YZ}(y,z)^2\\
&\leq \frac{|V|}\mu\\
&\leq \frac{\eta |V|}{\beta_\eta(P_{XY},\id_X\times Q_Y)}\,.
\end{align}
\end{subequations}
By Jensen's inequality, $\mathbb E_f \|\bar P_{VZ}-R_V{\times} \bar P_Z\|_1^2\geq (\mathbb E_f \|\bar P_{VZ}-R_V{\times} \bar P_Z\|_1)^2$.
Returning to \eqref{eq:firstcamp}, we have
\begin{align}
\label{eq:land}
\mathbb E_f \Delta \leq \eta+\frac12\sqrt{\frac{\eta |V|}{\beta_\eta(P_{YZ},\id_Y\times Q_Z)}}\,.
\end{align}
Choosing $\eta=\eps-\delta$ and $|V|=\lfloor\frac{4\delta^2}{\eps-\delta}\max_{Q_Z}\beta_{\eps-\delta}(P_{YZ},\id_Y\times Q_Z)\rfloor$ ensures there exists an $f$ such that $\Delta\leq \eps$.
Since $\beta_\alpha(P,Q)$ is continuous in $Q$, we may maximize over all $Q_Z$, not just those with strictly positive probability. This completes the proof.
%\end{proof}

Were $Z$ quantum rather than classical, the test $\Lambda_{YZ}$ would take the form of a set of positive operators $\Lambda_y$, and we would immediately be confronted with the possibility that $\Lambda_y$ may not commute with the conditional quantum states of $Z$ given $Y=y$, i.e.\ the quantum versions of $P_{Z|Y=y}$. 
In the proof we assume these objects commute, e.g.\ in \eqref{eq:commuteex}. 
One method of dealing with this issue, as done in \cite{hayashi_quantum_2017}, is to ``pinch'' the quantum states (remove the off-diagonal elements) to restore commutation, and appeal to bounds between the pinched and unpinched states. 
It is unclear if this can be done in combination with the steps taken here to end up with a bound in terms of $\beta_\alpha$.

\subsection{Comparison of the bounds}
\begin{comment}
The first- and second-order asymptotics of $\log K_\eps(Y^n|Z^n)$ now follow immediately from Lemma~\ref{lem:aep}. 
The first-order behavior, i.e.\ the optimal rate, follows from \cite[Theorem 1]{ahlswede_common_1993} by setting $X=Y$. 
The second-order behavior was originally shown in \cite[Theorem 3]{watanabe_non-asymptotic_2013}, starting from a one-shot bound in terms of the smooth min-entropy (see also \cite{hayashi_tight_2013}). 
\begin{corollary}
For any $P_{YZ}$ and $\eps\in (0,1)$, 
\begin{align}
\label{eq:2nd}
\log K_\eps(Y^n|Z^n)_{P^{\times n}}=nH(Y|Z)+\sqrt{nV(Y|Z)}\,\Phi^{-1}(\eps)+O(\log n)\,.
\end{align}
\end{corollary}
\begin{proof}
In the upper bound, set $\eta=\frac1{\sqrt n}$. Lemma~\ref{lem:aep} applies for $n$ large enough so that $\eps+\eta<1$, and the factor $\frac1\eta$ disappears when taking the limit $n\to \infty$.
Likewise, in the lower bound we can choose $Q_Z=P_Z$ and $\eta=\frac1{\sqrt n}$ so that Lemma~\ref{lem:aep} again applies for $n$ large enough such that $\eta\leq \eps$. 
\end{proof}
\end{comment}

After publication of the initial version of this manuscript, Wei Yang pointed out that the bounds contained herein are related to those recently derived in \cite{yang_wiretap_2017}. 
The achievability bound can be derived from \cite[Lemma 2]{yang_wiretap_2017}, which states that there exists an $(K,\eps)$ privacy amplification scheme such that, for all distributions $Q_Z$ and $\gamma>0$, 
\begin{align}
\eps \leq E_\gamma(P_{YZ},R_Y{\times}Q_Z)+\sqrt{\frac{\gamma K}{4|Y|}\mathbb E_{P_{YZ}}[\exp(-|i(Y{:}Z)-\log \gamma|)]}\,,
\end{align}
where $i(y,z)=\log P_{YZ}(y,z)-\log R_Y(y)Q_Z(z)$. 
To get back to \eqref{eq:forward}, observe that the expectation term is necessarily smaller than 1 and $E_\gamma\leq P_{YZ}[\frac{P_{YZ}(y,z)}{R_Y(y)Q_Z(z)}\geq \gamma]$.
Now set $\alpha=P_{YZ}[\frac{P_{YZ}(y,z)}{R_Y(y)Q_Z(z)}\geq \gamma]$ and use \eqref{eq:betaupper} to get $\eps\leq \alpha+\frac12\sqrt{\alpha K/\beta_\alpha(P_{YZ},\id_Y{\times}Q_Z)}$, which is \eqref{eq:land}.

Lemma 5 of \cite{yang_wiretap_2017} is the converse statement that every $(K,\eps)$ privacy amplification protocol for $P_{YZ}$ ($Z$ classical) must satisfy
\begin{align}
\label{eq:Egammaconverse}
\eps\geq E_{|Y|/K}(P_{YZ},R_Y{\times} P_Z)\,.
\end{align}
\begin{prop}
The converse bounds \eqref{eq:converse} and \eqref{eq:Egammaconverse} are equivalent.
\end{prop}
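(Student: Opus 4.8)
The plan is to show that, once the normalization $R_Y=\tfrac1{|Y|}\id_Y$ is accounted for, both \eqref{eq:converse} and \eqref{eq:Egammaconverse} are — as constraints on the admissible values of $K$ — equivalent to one and the same family of test inequalities. Write $Q:=\id_Y{\times}P_Z$, so that $R_Y{\times}P_Z=\tfrac1{|Y|}Q$. The variational identity $E_\gamma(P,Q')=\max_\Lambda\big(\langle\Lambda,P\rangle-\gamma\langle\Lambda,Q'\rangle\big)$ established above gives at once the scaling relation $E_\gamma(P,cQ)=E_{c\gamma}(P,Q)$ for $c>0$; taking $c=1/|Y|$ and $\gamma=|Y|/K$ rewrites the right-hand side of \eqref{eq:Egammaconverse} as $E_{1/K}(P_{YZ},Q)$. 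We may assume $\eps>0$, and $K\geq1$ always holds.

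First I would reformulate \eqref{eq:Egammaconverse}. By the variational identity, $\eps\geq E_{1/K}(P_{YZ},Q)$ holds precisely when $\langle\Lambda,P_{YZ}\rangle-\tfrac1K\langle\Lambda,Q\rangle\leq\eps$ for every test $\Lambda$, i.e.\ when
\begin{align}
\label{eq:propcommon}
\langle\Lambda,Q\rangle\ \geq\ K\big(\langle\Lambda,P_{YZ}\rangle-\eps\big)\qquad\text{for all tests }\Lambda\,.
\end{align}
Next I would reformulate \eqref{eq:converse}. Since $K\leq\min_\eta g(\eta)$ iff $K\leq g(\eta)$ for every $\eta$, and since the $\eta=0$ term is $+\infty$ — because $\beta_\eps(P_{YZ},Q)>0$ for $\eps>0$: any test with $\langle\Lambda,Q\rangle=0$ must vanish on $\text{supp}(P_Z)$, hence $\langle\Lambda,P_{YZ}\rangle=0<\eps$, so no such test is feasible — the inequality \eqref{eq:converse} is equivalent to $K\eta\leq\beta_{\eps+\eta}(P_{YZ},Q)$ for all $\eta\in(0,1-\eps]$. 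Unfolding the LP \eqref{eq:beta} for $\beta$, this reads: for every $\eta\in(0,1-\eps]$ and every test $\Lambda$ with $\langle\Lambda,P_{YZ}\rangle\geq\eps+\eta$ one has $\langle\Lambda,Q\rangle\geq K\eta$.

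It then remains to match the two reformulations. Equation \eqref{eq:propcommon} implies the condition just displayed: if $\langle\Lambda,P_{YZ}\rangle\geq\eps+\eta$ then $K\big(\langle\Lambda,P_{YZ}\rangle-\eps\big)\geq K\eta$. Conversely, assuming that condition, fix a test $\Lambda$ and set $p:=\langle\Lambda,P_{YZ}\rangle\in[0,1]$; if $p\leq\eps$ then \eqref{eq:propcommon} is trivial, while if $p>\eps$ then $\eta:=p-\eps\in(0,1-\eps]$ (here $p\leq1$ is used) and $\langle\Lambda,P_{YZ}\rangle=\eps+\eta$, so we obtain $\langle\Lambda,Q\rangle\geq K\eta=K(p-\eps)$, which is \eqref{eq:propcommon}. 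Hence \eqref{eq:converse} $\Leftrightarrow$ \eqref{eq:propcommon} $\Leftrightarrow$ \eqref{eq:Egammaconverse}.

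The step I would watch most carefully is the choice $\eta=\langle\Lambda,P_{YZ}\rangle-\eps$ in the converse direction: it is precisely the fact that $P_{YZ}$-probabilities of tests lie in $[0,1]$ that keeps this $\eta$ inside the admissible window $[0,1-\eps]$, which is what makes the minimization in \eqref{eq:converse} tight enough to "see" every test. Apart from this and the trivial bookkeeping at the endpoint $\eta=0$, I do not anticipate any real obstacle; the whole argument rests on the variational formula for $E_\gamma$ and the linear program \eqref{eq:beta}.
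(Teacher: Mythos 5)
Your proof is correct and takes a genuinely cleaner route than the paper's. Where the paper argues each direction separately — one direction via the variational form of $E_\gamma$, tests at vertices of the testing region, and linear interpolation of $\alpha\mapsto\beta_\alpha$; the other by carefully extracting the optimizer $K^\star$, $\gamma^\star$ of \eqref{eq:Egammaconverse} and plugging $\eta=\alpha-\eps$ for $\alpha=P_{YZ}[\frac{P_{YZ}}{R_Y P_Z}\geq\gamma^\star]$ into \eqref{eq:converse} — you instead unfold both sides of the equivalence into a single common normal form, namely the universal test constraint $\langle\Lambda,Q\rangle\geq K(\langle\Lambda,P_{YZ}\rangle-\eps)$ with $Q=\id_Y{\times}P_Z$. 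On the $E_\gamma$ side this drops out of the variational identity $E_\gamma=\max_\Lambda(\langle\Lambda,P\rangle-\gamma\langle\Lambda,Q'\rangle)$ together with the scaling $E_\gamma(P,cQ)=E_{c\gamma}(P,Q)$; on the $\beta$ side it drops out of the primal LP \eqref{eq:beta} after discarding the vacuous $\eta=0$ term. Matching the two is then a one-line slide of $\eta$. This buys you a symmetric, direction-free argument that never touches the geometry (vertices, piecewise linearity) of the testing region, and the reduction "for each $\Lambda$, take $\eta=\langle\Lambda,P_{YZ}\rangle-\eps$" makes it transparent why the minimization over $\eta$ in \eqref{eq:converse} encodes precisely the information in the $E_\gamma$ bound. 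The small things you flagged are indeed the only places to be careful: $\eps>0$ is needed to discard $\eta=0$ (at $\eps=0$ both bounds trivialize in the same way, so the equivalence is still fine), and $p\le 1$ keeps $\eta$ inside $(0,1-\eps]$; both are handled correctly.
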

\begin{proof}
To obtain \eqref{eq:converse} from this expression, use the variational form of $E_\gamma$ to write $\eps\geq \max_{\Lambda} \langle \Lambda,P_{YZ}\rangle-\frac{|Y|}{K}\langle \Lambda,R_Y{\times}P_Z\rangle$. 
Now consider tests that yield the vertices of $\mathcal R(P_{YZ},R_Y{\times}P_Z)$, specifically $\Lambda_k=\id[\frac{P(x)}{Q(x)}\geq \gamma_k]$ such that $\alpha_k=\langle \Lambda_k,P_{YZ}\rangle$ is larger than $\eps$. 
Since $\beta_{\alpha_k}(P_{YZ},R_Y{\times} P_Z)= \langle \Lambda_k,R_Y{\times}P_Z\rangle$, we have 
\begin{align}
\label{eq:epsboundbeta}
\eps\geq \alpha_k-\frac{|Y|}{K}\beta_{\alpha_k}(P_{YZ},R_Y{\times} P_Z)\,.
\end{align}
The function $\alpha\mapsto \beta_\alpha$ interpolates linearly between vertices, and therefore the relation holds for arbitrary $\alpha$. 
Setting $\alpha-\eps=\eta$ and optimizing over $\eta$ recovers \eqref{eq:converse}. 

To show that \eqref{eq:converse} is equivalent, we use \eqref{eq:Egammaconverse} to choose a suitable $\eta$ in the optimization. 
Specifically, suppose $K^\star$ is the optimizer in \eqref{eq:Egammaconverse} so that $\eps=E_{|Y|/K^\star}(P_{YZ},R_Y{\times} P_Z)$, and let $\gamma^\star=|Y|/K^\star$. 
Now define $\eta=\alpha-\eps$ for $\alpha=P_{YZ}[\frac{P_{YZ}(y,z)}{R_Y(y)P_Z(z)}\geq \gamma^\star]$. 
Clearly $\eta\leq 1-\eps$ since $\alpha$ must be positive. 
On the other hand, $\eta$ must be positive since we have $\eps=\alpha-\gamma^\star (R_Y{\times} P_Z)[\frac{P_{YZ}(y,z)}{R_Y(y)P_Z(z)}\geq \gamma^\star]$, the second term of which is positive. 
Indeed, by \eqref{eq:egammabetaalpha}, we must have $\eps=\eps+\eta-\gamma^\star \beta_{\eps+\eta}(P_{YZ},R_Y{\times} P_Z)$.
Rearranging the expression gives $|Y|\beta_{\eps+\eta}(P_{YZ},\id_Y{\times} P_Z)=\eta K^\star$, and therefore $K_\eps(Y|Z)_P\leq K^\star$. 
Since we have $K_\eps(Y|Z)_P\leq K^\star$ from the previous argument, this implies that the choice of $\eta$ is optimal. 
\end{proof}
The equivalence also holds when $Z$ is quantum, because the lower boundary of the testing region is still given by likelihood ratio tests as used above. 
However, the above counterexample to Lemma 

We can express the converse bound using a quantity similar to the smooth min-entropy by appealing to Proposition 13.6 of \cite{liu_Egamma_2017}. 
For convenience, define $\lambda_{\min}^\eps(Y|Z)_P:=2^{-H_{\min}^\eps(Y|Z)_P}$ and let $\bar \lambda_{\min}^\eps(Y|Z)_P$ be the same optimization without the normalization condition. 
Note that dropping the normalization condition means that we do not explicitly require $Q_{YZ}$ as a variable at all, and we can instead simply define  
\begin{align}
\bar \lambda_{\min}^\eps(Y|Z)_P:=\min\{\lambda:\lambda \id_YP_Z\geq P_{YZ}-T_{YZ},\langle \id_{YZ},T_{YZ}\rangle\leq \eps;\lambda,T_{YZ}\geq 0\}\,.
\end{align}
This holds because for any feasible $\lambda$ and $T_{YZ}$ we can pick any $Q_{YZ}\geq 0$ satisfying $P_{YZ}-T_{YZ}\leq Q_{YZ}\leq \lambda \id_YP_Z$.
In fact, since we can replace any feasible $T_{YZ}$ with one satisfying $P_{YZ}-T_{YZ}\geq 0$ without affecting feasibility of $\lambda$, $P_{YZ}-T_{YZ}$ can be assumed to be positive without loss of generality. % and then simply pick any $Q_{YZ}$ 
% %Then, for any feasible $\lambda$ and $T_{YZ}$, we can simply pick any $Q_{YZ}$ 
% satisfying $P_{YZ}-T_{YZ}\leq Q_{YZ}\leq \lambda \id_YP_Z$.
% This will give a nonnegative $Q_{YZ}$ by construction.
% %Note that we can replace  so $Q_{YZ}$ defined in this way is surely positive. 
\begin{prop}
The converse bound \eqref{eq:Egammaconverse} is equivalent to 
\begin{align}
\label{eq:WHconverserelax}
\frac1{K_\eps(Y|Z)_P}\geq \bar\lambda_{\min}^\eps(Y|Z)_P\,.
\end{align}
\end{prop}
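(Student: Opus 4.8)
The plan is to show the two bounds in Theorem~\ref{thm:PA}'s converse, namely $K_\eps(Y|Z)_P\le \tfrac{|Y|}{K^\star}$-type statements, coincide by reformulating $\bar\lambda_{\min}^\eps(Y|Z)_P$ as a linear program and identifying its optimal value with $1/K^\star$ where $\eps=E_{|Y|/K^\star}(P_{YZ},R_Y{\times}P_Z)$. First I would take the LP defining $\bar\lambda_{\min}^\eps(Y|Z)_P$, with variables $\lambda\ge 0$ and $T_{YZ}\ge 0$, constraints $\lambda\id_YP_Z\ge P_{YZ}-T_{YZ}$ and $\langle\id_{YZ},T_{YZ}\rangle\le\eps$, and rescale: write $\mu=1/\lambda$ (assuming $\lambda>0$, which holds whenever the bound is nontrivial) and multiply the first constraint through, so it becomes $\id_YP_Z\ge \mu(P_{YZ}-T_{YZ})$. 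Since the optimal $T_{YZ}$ at each entry is clearly $T_{YZ}(y,z)=\max\{P_{YZ}(y,z)-\lambda P_Z(z)/|Y|,\,0\}$, i.e.\ the ``overflow'' above the cap, the constraint $\langle\id,T\rangle\le\eps$ says exactly that the total mass of $P_{YZ}$ exceeding the threshold $\lambda P_Z(z)/|Y|$ is at most $\eps$. Recognizing $\gamma:=|Y|/K=|Y|\lambda$ (so $\lambda=\gamma/|Y|$ and $\lambda P_Z(z)/|Y| = \tfrac{\gamma}{|Y|^2}P_Z(z)$... ) — more cleanly, $\lambda\id_YP_Z = \tfrac{1}{K}\id_Y P_Z = \tfrac{\gamma}{|Y|}\id_Y P_Z = \gamma\,(R_Y{\times}P_Z)$, so the per-entry overflow is $\max\{P_{YZ}(y,z)-\gamma R_Y(y)P_Z(z),0\}$ and $\langle\id,T\rangle = \sum_{(y,z):P_{YZ}(y,z)\ge\gamma R_Y(y)P_Z(z)}\big(P_{YZ}(y,z)-\gamma R_Y(y)P_Z(z)\big) = E_\gamma(P_{YZ},R_Y{\times}P_Z)$, using the identity $E_\gamma(P,Q)=\max_\Lambda\langle\Lambda,P\rangle-\gamma\langle\Lambda,Q\rangle$ established earlier in the excerpt.

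With that translation in hand the equivalence is almost immediate. The LP for $\bar\lambda_{\min}^\eps$ minimizes $\lambda$ subject to $E_{|Y|/K}(P_{YZ},R_Y{\times}P_Z)\le\eps$ with $K=1/\lambda$; since $E_\gamma(P_{YZ},R_Y{\times}P_Z)$ is nonincreasing in $\gamma$ (it equals $\max_\Lambda\langle\Lambda,P\rangle-\gamma\langle\Lambda,Q\rangle$, a pointwise minimum over the affine-in-$\gamma$ family for fixed $\Lambda$ is nonincreasing, and the max of nonincreasing functions is nonincreasing), minimizing $\lambda$ (equivalently maximizing $K$, equivalently minimizing $\gamma=|Y|\lambda$) subject to $E_\gamma\le\eps$ gives exactly $\gamma^\star$ the smallest $\gamma$ with $E_\gamma(P_{YZ},R_Y{\times}P_Z)\le\eps$, hence $\bar\lambda_{\min}^\eps(Y|Z)_P = \gamma^\star/|Y| = 1/K^\star$ where $K^\star=|Y|/\gamma^\star$ is the optimizer in \eqref{eq:Egammaconverse}. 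So \eqref{eq:Egammaconverse}, which reads $\eps\ge E_{|Y|/K_\eps(Y|Z)_P}(P_{YZ},R_Y{\times}P_Z)$, is equivalent to $|Y|/K_\eps(Y|Z)_P\ge\gamma^\star$, i.e.\ $1/K_\eps(Y|Z)_P\ge\gamma^\star/|Y| = \bar\lambda_{\min}^\eps(Y|Z)_P$, which is precisely \eqref{eq:WHconverserelax}.

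I would also handle the boundary cases cleanly: monotonicity of $E_\gamma$ in $\gamma$ plus its continuity (it is a max of continuous functions, piecewise-linear in $\gamma$) ensure $\gamma^\star$ is well-defined and attained; the edge case $\lambda=0$ corresponds to $K=\infty$, which can only occur when $E_0(P_{YZ},R_Y{\times}P_Z)=0$, i.e.\ $P_{YZ}$ is already product, a degenerate situation in which both \eqref{eq:Egammaconverse} and \eqref{eq:WHconverserelax} are trivially satisfied; and I should note that since $E_{|Y|/K}\le\eps$ can be rewritten directly via the variational form, no appeal to strong duality of the LP is even strictly necessary — the argument is essentially a change of variables plus the identity $\langle\id,T^\star\rangle=E_\gamma$. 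The one subtlety worth stating explicitly, and the main ``obstacle'' such as it is, is verifying that the normalization-free $\bar\lambda_{\min}^\eps$ (as opposed to $\lambda_{\min}^\eps$ with the $\langle\id_{YZ},Q_{YZ}\rangle=1$ constraint) is genuinely the right object: this is exactly why the excerpt first establishes that $\bar\lambda_{\min}^\eps$ can be written without $Q_{YZ}$ at all and that $P_{YZ}-T_{YZ}$ may be assumed positive, so I would simply cite that reduction and then run the rescaling argument above.
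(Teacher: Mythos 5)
Your proof is correct and takes a genuinely different — and arguably cleaner — route than the paper's. The paper proves the equivalence via two separate implications: to go from \eqref{eq:WHconverserelax} to \eqref{eq:Egammaconverse} it invokes a triangle-type inequality $E_\gamma(P,R_Y{\times}P_Z)\leq\delta(P,Q)+E_\gamma(Q,R_Y{\times}P_Z)$ (Proposition 6 of \cite{liu_Egamma_2017}), and for the reverse direction it constructs $Q_{YZ}$ as the pointwise minimum of $P_{YZ}$ and $\gamma R_Y{\times}P_Z$. You instead evaluate the LP defining $\bar\lambda_{\min}^\eps$ directly: for fixed $\lambda$, the trace-minimizing feasible $T_{YZ}$ is the entrywise positive part of $P_{YZ}-\lambda\id_YP_Z=P_{YZ}-\gamma R_Y{\times}P_Z$ (with $\gamma=\lambda|Y|$), whose total mass is exactly $E_\gamma(P_{YZ},R_Y{\times}P_Z)$ via the variational form $E_\gamma=\max_\Lambda\langle\Lambda,P\rangle-\gamma\langle\Lambda,Q\rangle$. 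This yields the closed form $\bar\lambda_{\min}^\eps=\gamma^\star/|Y|$ where $\gamma^\star=\min\{\gamma\geq0:E_\gamma(P_{YZ},R_Y{\times}P_Z)\leq\eps\}$, after which the equivalence of the two converse bounds is an immediate consequence of monotonicity of $\gamma\mapsto E_\gamma$. This unifies the two directions, avoids the appeal to the external Proposition 6 of \cite{liu_Egamma_2017}, and requires neither strong LP duality nor a separate smoothing construction — the positive-part observation you make is shared with the paper's second direction, but you leverage it more fully. One small slip in your boundary-case discussion: you claim $\lambda^\star=0$ requires $E_0(P_{YZ},R_Y{\times}P_Z)=0$, which ``can only occur when $P_{YZ}$ is already product.'' In fact $E_0(P,Q)=\max_\Lambda\langle\Lambda,P\rangle=1$ for any normalized $P$ (take $\Lambda=\id$), so $E_0$ is never zero; the edge case $\lambda^\star=0$ can only arise when $\eps\geq1$. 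Your conclusion that this case is degenerate is still correct, but the stated reason is wrong. Also, early on you wrote the threshold as $\lambda P_Z(z)/|Y|$ before self-correcting to $\lambda P_Z(z)=\gamma R_Y(y)P_Z(z)$; the corrected version is what the argument actually uses.
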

\begin{proof}
First note that for any $\gamma$ and possibly non-normalized $Q_{YZ}$ we have 
\begin{subequations}
\label{eq:prop6}
\begin{align}
E_{\gamma}(P_{YZ},R_Y{\times}P_Z)
&=\max_\Lambda \,\langle \Lambda, P_{YZ}\rangle -\gamma\langle \Lambda,R_Y{\times}P_Z\rangle\\%[P_{Y|Z=z}(y)\geq \gamma]-\gamma R_YP_Z[P_{Y|Z=z}(y)\geq \gamma]\\
&=\max_\Lambda \,\langle \Lambda, P_{YZ}\rangle -\langle \Lambda, Q_{YZ}\rangle +\langle\Lambda,Q_{YZ}\rangle -\gamma\langle \Lambda,R_Y{\times}P_Z\rangle\\
&\leq \delta(P_{YZ},Q_{YZ})+E_\gamma(Q_{YZ},R_Y{\times}P_Z)\,.
\end{align}
\end{subequations}
(This is \cite[Proposition 6]{liu_Egamma_2017}.) 
Note that here we have extended the domain of $\delta$ to include non-normalized arguments, and now the function is no longer symmetric in its arguments. 
Nevertheless, its dual formulation is still $\delta$ is $\delta(P,Q)=\min\{\tr[T]:T\geq P-Q,T\geq 0\}$.
Now set $Q_{YZ}=P_{YZ}-T_{YZ}$ for $T_{YZ}$ an optimizer in $\bar \lambda_{\min}^\eps(Y|Z)_P$. 
%Note that we can replace any feasible $T_{YZ}$ by one smaller than $P_{YZ}$ without affecting feasibility of $\lambda$, so the resulting $Q_{YZ}$ is nonnegative.
Then $\delta(P_{YZ},Q_{YZ})\leq \eps$, while the second term in \eqref{eq:prop6} is zero for $\gamma\geq |Y|\bar \lambda_{\min}^\eps(Y|Z)_P$.
Therefore, setting $\gamma=|Y|/K_\eps(Y|Z)_P$ and using \eqref{eq:WHconverserelax} gives \eqref{eq:Egammaconverse}.  

For the other direction, pick any $\gamma>0$ and define $Q_{YZ}$ as the pointwise minimum of $P_{YZ}$ and $\gamma R_Y{\times}P_Z$. 
The optimal test $\Lambda$ in $\delta(P_{YZ},Q_{YZ})$ is just the indicator onto the positive part of the difference $P_{YZ}-Q_{YZ}$, which by construction is equal to the indicator onto the positive part of $P_{YZ}-\gamma R_Y{\times}P_Z$.  
Then we have $\delta(P_{YZ},Q_{YZ})=E_\gamma(P_{YZ},R_Y{\times}P_Z)$. 
Therefore, since $Q_{YZ}\leq \lambda\id_YP_Z$ for $\lambda=\frac{\gamma}{|Y|}$ and $\tr[T_{YZ}]\leq E_\gamma(P_{YZ},R_Y{\times}P_Z)$ for $T_{YZ}$ the positive part of $P_{YZ}-\gamma R_Y{\times}P_Z$, it follows that $\bar \lambda_{\min}^\eps(Y|Z)_P\leq \gamma/|Y|$ for $\eps=E_\gamma(P_{YZ},R_Y{\times}P_Z)$.
Setting $\gamma=|Y|/K_\eps(Y|Z)_P$ and using \eqref{eq:Egammaconverse} implies \eqref{eq:WHconverserelax}.
\end{proof}

The $Q_{YZ}$ appearing in both directions of the above proof is smaller than $P_{YZ}$, which is precisely the kind of smoothing employed by Renner and Wolf~\cite{renner_simple_2005}. 
In particular, the converse in their Theorem 1 is the statement
\begin{align}
\label{eq:RWconverse}
\frac1{K_\eps(Y|Z)_P}\geq \widehat\lambda_{\min}^\eps(Y|Z)_P\,,
\end{align}
where $\widehat \lambda_{\min}^\eps(Y|Z)_P:=\min\{\lambda:\lambda \id_YP_Z\geq Q_{YZ},Q_{YZ}\leq P_{YZ},\langle \id_{YZ},Q_{YZ}\rangle\geq 1-\eps;\lambda,Q\geq 0\}$.
In light of the above, it is perhaps not too suprising that 
\begin{align}
\widehat\lambda_{\min}^\eps(Y|Z)_P=\bar\lambda_{\min}^\eps(Y|Z)_P\,,
\end{align}
and thus \eqref{eq:RWconverse} is equivalent to \eqref{eq:WHconverserelax}.
This follows because, on the one hand, $Q_{YZ}=P_{YZ}-T_{YZ}$ for $T_{YZ}$ optimal in the latter is feasible in the former, and on the other, $T_{YZ}=P_{YZ}-Q_{YZ}$ for $Q_{YZ}$ optimal in the former is feasible for the latter. 

Meanwhile, the smooth min-entropy bound of Watanabe and Hayashi \cite[Theorem 1]{watanabe_non-asymptotic_2013} is simply
\begin{align}
\label{eq:WHconverse}
\frac1{K_\eps(Y|Z)_P}\geq \lambda_{\min}^\eps(Y|Z)_P\,.
%_{Q_{YZ}\in B_\eps(P_{YZ})}\max_{y,z} \frac{Q_{YZ}(y,z)}{P_Z(z)},
\end{align}
(This bound can be shown using the stochastic map $W$ as in \eqref{eq:converse}, which is essentially the same as their proof.)
Nominally, then, the equivalent bounds \eqref{eq:converse}, \cite[Lemma 5]{yang_wiretap_2017}, \eqref{eq:WHconverserelax}, and \cite[Theorem 1]{renner_simple_2005} are relaxations \eqref{eq:WHconverse}. 
However, these are all equivalent whenever the former are nontrivial.
\begin{prop}
If $\bar \lambda_{\min}^\eps(Y|Z)_P>\frac1{|Y|}$, then $\bar\lambda_{\min}^\eps(Y|Z)_P= \lambda_{\min}^\eps(Y|Z)_P$.
\end{prop}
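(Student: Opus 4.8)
The plan is to show that the normalization constraint in $\lambda_{\min}^\eps$ is not binding whenever the optimal value exceeds $\tfrac1{|Y|}$, so that it coincides with $\bar\lambda_{\min}^\eps$. One direction is free: since $\bar\lambda_{\min}^\eps$ is obtained by dropping a constraint, we always have $\bar\lambda_{\min}^\eps(Y|Z)_P\leq \lambda_{\min}^\eps(Y|Z)_P$. So the work is in the reverse inequality under the hypothesis $\bar\lambda_{\min}^\eps(Y|Z)_P>\tfrac1{|Y|}$. First I would take an optimizer $(\lambda,T_{YZ})$ for $\bar\lambda_{\min}^\eps(Y|Z)_P$, and by the remark just before Proposition~6's discussion, assume without loss of generality that $Q_{YZ}:=P_{YZ}-T_{YZ}\geq 0$ and also (by the Renner--Wolf equivalence already established) that $Q_{YZ}\leq P_{YZ}$, hence $\langle\id_{YZ},Q_{YZ}\rangle = 1-\langle\id_{YZ},T_{YZ}\rangle\geq 1-\eps$. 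The only thing separating this from a feasible point of $\lambda_{\min}^\eps$ is that its total mass $m:=\langle\id_{YZ},Q_{YZ}\rangle$ may be strictly less than $1$.

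The key step is to repair the mass deficit by adding back weight along $\id_Y\times P_Z$ without increasing $\lambda$. Concretely, I would set $Q'_{YZ} := Q_{YZ} + (1-m)\,\id_Y\times P_Z/|Y|$ when $m<1$ (and $Q'_{YZ}:=Q_{YZ}$ otherwise). Then $\langle\id_{YZ},Q'_{YZ}\rangle = m + (1-m) = 1$, so $Q'$ is normalized, and $Q'_{YZ}\geq 0$. For the constraint $Q'_{YZ}\leq \lambda\,\id_Y P_Z$: since $\lambda = \bar\lambda_{\min}^\eps(Y|Z)_P > \tfrac1{|Y|}$ and $1-m\leq \eps \le $ (we may assume $\eps<1$, else the statement is vacuous as both sides are $\le 1/|Y|$... actually one should just note $1-m\le\eps<1$ so the added term has coefficient $<1$), we have $Q'_{YZ}\leq \lambda\,\id_Y P_Z + \tfrac{1-m}{|Y|}\id_Y P_Z$, and I need this to still be $\leq \lambda\,\id_Y P_Z$ — which is \emph{not} automatic. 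So instead the cleaner move is: the added mass $(1-m)/|Y|$ per entry of the form $\id_Y P_Z$ can be absorbed provided the slack $\lambda\,\id_Y P_Z - Q_{YZ}$ has enough total room, and indeed its total mass is $\lambda|Y| - m > 1 - m$ by hypothesis. One then distributes the extra mass $1-m$ into the available slack (this is a transportation/flow feasibility argument — any nonnegative vector of total mass $\le$ the total slack, supported where slack is positive, can be added), keeping $Q'_{YZ}\leq\lambda\,\id_Y P_Z$ and $Q'_{YZ}\geq Q_{YZ}\geq 0$ while raising the mass to exactly $1$. Finally set $T'_{YZ} := P_{YZ}-Q'_{YZ}$; since $Q'_{YZ}\geq Q_{YZ} = P_{YZ}-T_{YZ}$ we get $T'_{YZ}\leq T_{YZ}$, but $T'$ need not be nonnegative, so I would instead just verify directly that $T'_{YZ}\geq P_{YZ}-Q'_{YZ}$ is met with equality and that $\langle\id_{YZ},T'_{YZ}\rangle = 1 - 1 \cdot\!\!$ — wait, $=1-\langle\id_{YZ},Q'_{YZ}\rangle$ would be $0$; the correct accounting is $\langle\id_{YZ},T'_{YZ}\rangle \le \langle\id_{YZ},T_{YZ}\rangle\le\eps$ only if we added mass to $Q$ in regions where $T_{YZ}$ was positive, which the flow argument can arrange since $T_{YZ}$ is supported exactly where $Q_{YZ}<P_{YZ}$, overlapping the slack region. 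Thus $(\lambda, T'_{YZ}, Q'_{YZ})$ is feasible for $\lambda_{\min}^\eps$, giving $\lambda_{\min}^\eps(Y|Z)_P\leq\lambda=\bar\lambda_{\min}^\eps(Y|Z)_P$.

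The main obstacle I anticipate is exactly this redistribution step: one must choose \emph{where} to add the missing mass $1-m$ so that simultaneously (i) the upper bound $Q'_{YZ}\leq\lambda\,\id_Y P_Z$ is preserved, and (ii) the smoothing budget $\langle\id_{YZ},T'_{YZ}\rangle\leq\eps$ is not exceeded. Both are guaranteed by the hypothesis $\bar\lambda_{\min}^\eps(Y|Z)_P > \tfrac1{|Y|}$, which is precisely what makes the total available slack $\lambda|Y|-m$ strictly exceed the deficit $1-m$; the cleanest way to present it is as a one-line feasibility statement (any nonnegative redistribution of mass $1-m$ into the nonnegative slack vector is possible, and it only decreases $P_{YZ}-Q_{YZ}$ pointwise, hence only decreases the mass of the optimal $T$). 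I would also remark, as the paper does elsewhere, that this argument and hence the equivalence carries over verbatim to quantum $Z$, since no property of $P_Z$ as a distribution (rather than a density operator) is used.
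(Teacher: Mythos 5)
Your proof is correct and follows essentially the same route as the paper's: fix an optimizer $(\lambda,T_{YZ})$ of $\bar\lambda_{\min}^\eps$ with $P_{YZ}-T_{YZ}\geq 0$, and observe that the box of admissible $Q_{YZ}$ with $P_{YZ}-T_{YZ}\leq Q_{YZ}\leq \lambda\,\id_Y P_Z$ contains one of unit mass exactly when $\lambda|Y|\geq 1$. The detour that costs you most of the length and false starts --- worrying whether the new $T'_{YZ}:=P_{YZ}-Q'_{YZ}$ stays nonnegative or within the $\eps$ budget, and then trying to route the extra mass into the support of $T_{YZ}$ --- is unnecessary: the constraint in $\lambda_{\min}^\eps$ is the inequality $T_{YZ}\geq P_{YZ}-Q_{YZ}$, not an equality, so you should simply keep the \emph{original} $T_{YZ}$. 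Since $Q'_{YZ}\geq P_{YZ}-T_{YZ}$, that $T_{YZ}$ remains feasible with the enlarged $Q'_{YZ}$ no matter where the extra mass is placed, and $\langle\id,T_{YZ}\rangle\leq\eps$ is untouched. With that in hand the ``transportation'' framing collapses to a one-line intermediate-value argument --- e.g.\ take $Q'_{YZ}=(1-s)(P_{YZ}-T_{YZ})+s\,\lambda\,\id_Y P_Z$, whose mass sweeps continuously from $1-\langle\id,T_{YZ}\rangle\leq 1$ up to $\lambda|Y|>1$ --- which is exactly what the paper does in two lines.
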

\begin{proof}
Since the optimal $T_{YZ}$ is positive and has no entry larger than that of $P_{YZ}$, the smallest normalization of the possible $Q_{YZ}$ is less than one.
The largest is  $\lambda |Y|$, and hence a normalized $Q_{YZ}$ can be found whenever the optimal $\lambda$ is larger than $\frac1{|Y|}$. 
\end{proof}

% Additionally, the converse is an improvement on Lemma 29 of \cite{hayashi_uniform_2016}, which states that every $(K,\eps)$ privacy amplification protocol for $P_{YZ}$ ($Z$ classical) must satisfy
% \begin{align}
% \label{eq:hwconversenew}
% \eps\geq \max_{\gamma\in [0,1]} P_{YZ}\big[P_{Y|Z=z}(y)\geq \gamma\big]\left(1-\frac1{\gamma K}\right)\,.
% \end{align}
% To derive this expression, first rewrite \eqref{eq:epsboundbeta} as $\eps\geq \alpha-\frac1K\beta_\alpha(P_{YZ},\id_Y{\times}P_Z)$, which holds for arbitrary $\alpha$. 
% Now consider the test $\Lambda=\id[P_{Y|Z=z}(y)\geq \gamma]$ so that $\alpha=P_{YZ}\big[P_{Y|Z=z}(y)\geq \gamma\big]$. Applying \eqref{eq:betaupper} then yields \eqref{eq:hwconversenew}. 

\begin{figure}[h]
\centering
%\tikzsetnextfilename{ExtractionExample}
%\tikzset{external/export next=false}
\begin{tikzpicture}
\begin{axis}[
enlargelimits = false,
  ymin=0.19,
  ymax=0.43,
  xlabel={blocklength $n$},
  ylabel={extraction rate \quad$\tfrac1n \log\, K_\eps(Y^n|Z^n)$},
  scaled x ticks = false,
  xmin=100,
  xmax=5000,
  xtick={1000,2000,3000,4000,5000},
  xticklabels={1000,2000,3000,4000,5000},
  legend style={font=\scriptsize,at={(.985,0.02)},anchor=south east},
  legend cell align=left,
]
\addplot[name path=old,mred,semithick] table {oldupper.dat};
\addlegendentry{upper bound \cite[(10)]{watanabe_non-asymptotic_2013}}

\addplot[name path=new,morange,semithick] table {betterupper.dat};
\addlegendentry{upper bound \eqref{eq:converse}, \cite[(154)]{yang_wiretap_2017},}

\addlegendimage{empty legend}
\addlegendentry{\phantom{upper bound} \cite[Th.\ 1]{renner_simple_2005}, \cite[Th.\ 1]{watanabe_non-asymptotic_2013}}

\addplot[domain=100:10000,samples=200] {0.499916 - 6.00364/sqrt(x)};
\addlegendentry{normal approximation\ \cite[Th.\ 3]{watanabe_non-asymptotic_2013}}

\addplot[mgreen,semithick] table {betterlower.dat};
\addlegendentry{lower bound \cite[(150)]{yang_wiretap_2017}}

\addplot[mblue,semithick] table {hybridlower.dat};
\addlegendentry{lower bound \cite[(12)]{watanabe_non-asymptotic_2013}}

\addplot[mpurple,semithick] table {hypolower.dat};
\addlegendentry{lower bound \eqref{eq:forward}, \cite[(9)]{watanabe_non-asymptotic_2013}}

% \legend{
% upper bound \cite[(10)]{watanabe_non-asymptotic_2013}\\ 
% upper bound {\begin{minipage}[t]{1.8cm}\eqref{eq:converse} \& \cite[(154)]{yang_wiretap_2017},\\ \cite[(192)]{hayashi_uniform_2016}\end{minipage}}\\ 
% normal approximation \eqref{eq:2nd}\\ 
% lower bound \cite[(150)]{yang_wiretap_2017}\\ 
% lower bound \cite[(12)]{watanabe_non-asymptotic_2013}\\ 
% lower bound \eqref{eq:forward}, \cite[(9)]{watanabe_non-asymptotic_2013}
% \\}

\end{axis}
\end{tikzpicture}
\caption{\label{fig:plot}Comparison of finite blocklength bounds on randomness extraction from $Y^n$ relative to $Z^n$ for the i.i.d.\ case of $Z$ obtained from uniform $Y$ through a BSC of crossover probability $0.11$ and a target security parameter of $\eps=10^{-10}$. The asymptotic rate for this example is $\nicefrac 12$.
}
\end{figure}
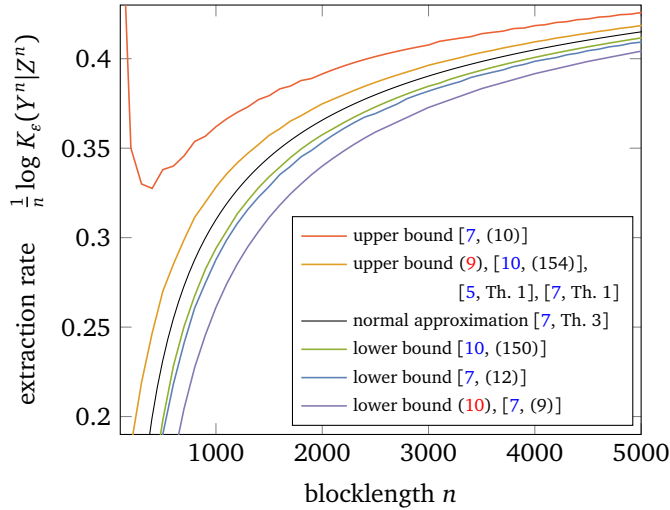

To compare the bounds for a fixed example, consider $Z$ obtained from uniform $Y$ through a binary symmetric channel with crossover probability $0.11$ and a target security parameter of $\eps=10^{-10}$. 
We follow \cite[Theorem 35]{polyanskiy_channel_2010-1} in computing $\beta_\alpha(P_{YZ}^{\times n},\id_{Y^n}{\times}R_{Z^n})$. 
Assuming $Q_Z$ uniform, the lower bound of \eqref{eq:forward} and that of \cite[Equation 9]{watanabe_non-asymptotic_2013} are essentially identical, with \eqref{eq:forward} improving on the latter by only four bits. 
The hybrid bound of \cite[Theorem 6]{watanabe_non-asymptotic_2013} is considerably better. 
Better yet is Theorem 18 in \cite{yang_wiretap_2017}\footnote{Equations 150 and 153 contain a small error: The term $g_n(\gamma)$ inside the square root should be $g_n(\gamma)^2$ (Wei Yang, private communication).}.
%Our \eqref{eq:forward} always has the advantage, though it is miniscule: an increase in $|V|$ by no more than 10 for the range $100\leq n\leq 10000$.
%As shown in Figure~\ref{fig:plot}, Theorem 18 in \cite{yang_wiretap_2017}\footnote{Equations 150 and 153 contain a small error: The term $g_n(\gamma)$ inside the square root should be $g_n(\gamma)^2$ (Wei Yang, private communication).} is considerably better. 
%their ``hybrid'' bound (Theorem 6) is considerably better. 
%Better yet is Theorem 18 in \cite{yang_wiretap_2017}\footnote{Equations 150 and 153 contain a small error: The term $g_n(\gamma)$ inside the square root should be $g_n(\gamma)^2$ (Wei Yang, private communication).}
On the converse side, \eqref{eq:converse} and \cite[Lemma 5]{yang_wiretap_2017} allow us to sidestep the numerical difficulties associated with smoothing and still compute the min-entropy bound \cite[Theorem 1]{watanabe_non-asymptotic_2013}. 
This yields a substantial improvement over the information-spectrum relaxation \cite[Equation 10]{watanabe_non-asymptotic_2013}. 
The relaxation of \cite[Lemma 29]{hayashi_uniform_2016} is substiantially better, only two bits worse in this range.   
%The converse of \eqref{eq:Egammaconverse} from \cite{yang_wiretap_2017} is yet lower, showing that \eqref{eq:converse} and \eqref{eq:Egammaconverse} are not equivalent. 
(All the bounds mentioned here are formulated specifically for classical $Z$.) 
It is satisfying to see the normal approximation obtained by disregarding the $O(\log n)$ term in \cite[Theorem 3]{watanabe_non-asymptotic_2013} lies midway between the tightest upper and lower bounds. 
Note that in this example the $O(\log n)$ term is provably absent from the asymptotic expansion, as shown in \cite[Theorem 19]{yang_wiretap_2017}.

\section{Repurposing randomness extraction}
\label{sec:coding}

\subsection{Channel simulation}
\label{sec:channelsim}
Given a channel $W:X\to Y$, the task of channel simulation is to reproduce the joint input and output statistics of $W$ applied to $X$ by making use of an ideal (noiseless) channel and common randomness at the encoder and decoder. 
Randomness is of course necessary to simulate the stochastic nature of $W$, and the main information-theoretic question is to characterize the amounts of communication and randomness required for a given channel $W$ and input $X$. 
A stronger variant is \emph{universal} channel simulation, in which the simulation works for any input $X$, i.e.\ the encoder is not constructed using the knowledge of the particular input $X$; in this paper we are only interested in the non-universal case.

Let us first specify the setup more concretely. 
Given an input distribution $P_X$, the channel $W$ leads to a joint distribution $P_{XY}$. 
An $(m,r,\eps)$ protocol for simulation of $W$ applied to $X$ consists of an encoding map $E:(X,V)\to T$ and a decoding map $D:(T,V)\to Y'$, such that $\log |T|=m$, $\log |V|=r$, and  $\delta(P_{XY},P_{XY'})\leq \eps$, where $Y'=D(E(X,V),V)$ and $V$ is a uniformly-distributed random variable.  

\begin{figure}[h]
\centering
\includegraphics{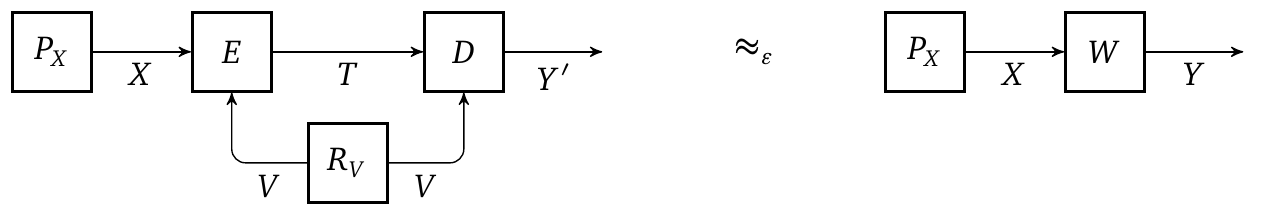}
\caption{\label{fig:channelsim} Simulation of the channel $W$ acting on input random variable $X$ by means of an ideal channel and common randomness.}
\end{figure}

The following shows that privacy amplification can be repurposed for channel simulation (see also \cite{renes_theory_2013}). %We show the following coding result.
\begin{prop}
\label{prop:pa2sim}
Let $W:X\to Y$ be an arbitrary channel and $P_X$ an arbitrary input distribution. 
Suppose the linear function $f$ is a protocol for $(k,\eps)$ privacy amplification of $Y$ relative to $X$. 
Then a $(\log |Y|-k,k,\eps)$ protocol for simulating the action of $W$ on $X$ can be constructed from $f$ and $P_{XY}$. 
\end{prop}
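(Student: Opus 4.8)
The plan is to build the simulation protocol around a reversible linear extension of the extractor $f$. Since $f:Y\to V$ is linear and, without loss of generality, surjective, it is a quotient map of vector spaces, so I may choose a complementary linear bijection $g:Y\to(T,V)$ with $\pi_V\circ g=f$, where $\pi_V$ is the projection onto the second component. Counting dimensions gives $|T|=|Y|/|V|$, hence $\log|T|=\log|Y|-k$; together with the common randomness being a uniform element of $V$ (so $r=\log|V|=k$), this already pins down the claimed parameters $m=\log|Y|-k$ and $r=k$. The protocol is then: let the shared randomness $V$ be uniform; on input $x$ and shared value $v$, the encoder draws $y$ from the conditional distribution $P_{Y|X=x,\,f(Y)=v}$ (the channel output conditioned on $X=x$ and on the hash value being $v$), forms $g(y)=(t,v)$, and transmits $t\in T$; the decoder outputs $Y'=g^{-1}(t,v)=y$.

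It then remains to check $\delta(P_{XY'},P_{XY})\le\eps$, and in fact I expect equality. The key move is to lift both the true joint distribution $P_{XY}$ and the simulated one $P_{XY'}$ to distributions on $X\times V\times Y$ by recording $v=f(y)$; since $(x,y)\mapsto(x,f(y),y)$ is injective and loses no information, the lifts have exactly the same variational distance as $P_{XY}$ and $P_{XY'}$. Now, by construction, both lifts share the same conditional $P_{Y|X=x,V=v}=P_{Y|X=x,\,f(Y)=v}$: for the true distribution this is just the definition of conditioning, and for the simulated one it is precisely what the encoder samples and what $g^{-1}$ reconstructs verbatim from $(t,v)$. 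The two lifts differ only in their $XV$-marginal, which is $P_{VX}$ in the true case and $R_V{\times}P_X$ in the simulated case, since $V$ is drawn uniformly and independently of $X$. Pulling the common conditional out of the $\ell_1$ norm (it is normalized) yields $\|P_{XVY}-P_{XV'Y'}\|_1=\|P_{VX}-R_V{\times}P_X\|_1$, i.e.\ $\delta(P_{XY'},P_{XY})=\delta(P_{VX},R_V{\times}P_X)=\eps$.

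The one delicate point I would spell out rather than wave at is the support bookkeeping: when the encoder receives a pair $(x,v)$ with $P_{V|X=x}(v)=0$, the conditional $P_{Y|X=x,\,f(Y)=v}$ is undefined and the encoder must fall back on an arbitrary choice. This is harmless, because for such a pair the $\ell_1$ contribution is $\sum_y P_X(x)R_V(v)\widetilde P_{Y|XV}(y|x,v)=P_X(x)R_V(v)=|P_{VX}(v,x)-R_V(v)P_X(x)|$ regardless of the arbitrary choice $\widetilde P_{Y|XV}$, matching exactly what the clean computation above produces; pairs with $P_X(x)=0$ vanish on both sides. Apart from this, the argument is entirely elementary, with no ``hard'' analytic step; the real content is the observation that a reversible post-processing converts the privacy-amplification security parameter into the simulation error with no loss, which is why neither a converse nor a data-processing inequality is needed here.
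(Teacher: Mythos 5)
Your construction is identical to the paper's: extend $f$ to a reversible linear $g:Y\to(T,V)$, take the encoder to be the induced conditional $P_{T|XV}$ and the decoder to be $g^{-1}$, note $|T|=|Y|/|V|$ so that $m=\log|Y|-k$ and $r=k$, and then bound $\delta(P_{XY},P_{XY'})$. The one place you diverge is that final bound. The paper simply applies the data-processing inequality to the stochastic map $(x,v)\mapsto(x,y')$ to conclude $\delta(P_{XY},P_{XY'})\leq\delta(P_{XV},P_X{\times}R_V)=\eps$, whereas you do the computation directly: lifting both joints to $X\times V\times Y$ via the injective map $(x,y)\mapsto(x,f(y),y)$ exhibits a common conditional $P_{Y|XV}$, and factoring that normalized conditional out of the $\ell_1$ norm yields the equality $\delta(P_{XY},P_{XY'})=\delta(P_{XV},P_X{\times}R_V)=\eps$. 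That is a small but genuine sharpening of the bound, and your support-bookkeeping paragraph correctly handles the only delicate case (pairs $(x,v)$ with $P_{VX}(v,x)=0$, where the encoder's conditional must be chosen arbitrarily but contributes the same $\ell_1$ mass regardless). For the proposition as stated, of course, either route suffices, and the data-processing shortcut has the advantage of visibly generalizing to quantum side information should one ever want it.
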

\begin{proof}
The linear function $f$ can always be extended to a reversible linear function $g:Y\to (T,V)$, and given the joint distribution $P_{XY}$, $g$ induces a conditional distribution $P_{T|XV}$.
Interpreted as a channel, $P_{T|XV}$ defines the encoder $E$.
It requires $\log |V|$ bits of random input and produces a message of $\log |T|=\log |Y|/|V|$ bits. 
Meanwhile, the decoder $D$ is just  $g^{-1}$, calling the output $Y'$ instead of $Y$.

By design, $E(V,X)=T$, meaning the combined action of $D$ and $E$ on $P_{XV}$ gives back $P_{XY}$. 
In the protocol, this combined action is instead applied to $P_X{\times}\, R_{V}$, producing $P_{XY'}$. 
By the data processing inequality we therefore have $\delta(P_{XY},P_{XY'})\leq \delta(P_{XV},P_X{\times}\, R_{V})$. 
But the premise is that the latter quantity is bounded from above by $\eps$, so the proof is complete.  
\end{proof} 

Since we are considering linear $f$, it makes sense to associate the function to a linear code. 
Suppose we consider $f$ to output the syndromes of some code $C$ encoding $n-k$ bits into $n$, i.e.\ $f(y^n)=H y^n$, where $H$ is the $k\times n$ parity check matrix of $C$. 
Then, if $g$ is to be reversible, the output in $T$ must correspond to the message encoded in $C$, and the action of $g^{-1}$ is therefore to produce a codeword of $C$ determined by $t$, offset to a coset determined by $v$. 
To see this more concretely, extend $H$ to an invertible $n\times n$ matrix $M$ which defines $g$, say $M=\begin{psmallmatrix}\bar M\\ \hat M\end{psmallmatrix}$ with $\hat M=H$. 
The reconstruction operation will apply $M^{-1}$, so that $y^n=M^{-1}(t\oplus v)$. 
Now let $M'=(M^{-1})^T$ and set $M'=\begin{psmallmatrix}\bar M'\\ \hat M'\end{psmallmatrix}$ for $\bar M'$ an $(n-k)\times n$ matrix.
In terms of $M'$, the action of $g^{-1}$ is given by $(y^n)^T=(t^T\oplus v^T)M'=t^T \bar M'\oplus v^T \hat M'$. 
Since $MM'^T=\id$, $H\bar M'^T=0$, meaning $\bar M'$ is the generator matrix of $C$, and the action of $g^{-1}$ is as claimed.

For channels $W$ whose input $X$ induces a uniformly-random output $Y$, the above construction directly leads to protocols which achieve the optimal communication rate of channel simulation in the asymptotic i.i.d.\ scenario, assuming unlimited common randomness. 
The amount of communication required is $|T^n|=|Y^n|/K_\eps(Y^n|X^n)$, and so in the i.i.d.\ case with uniform $Y$ we have $\lim_{n\to \infty}\tfrac1n\log |T^n|=\frac1n(\log |Y^n|-H(Y^n|X^n))=I(X{:}Y)$ while $\eps\to 0$ (cf.\ \cite[Theorem 2]{bennett_entanglement-assisted_2002}).  
To achieve the mutual information rate in the general case of a non-uniform $Y$, one option is to concatenate the above protocol with data compression of $Y^n$, though we will not pursue this further here (for a similar approach applied to noisy channel communication, see \cite{renes_noisy_2011}).

\subsection{Lossy source coding}

The task of lossy source coding is to compress a random variable $X$ to a smaller alphabet such that the reconstructed $X'$ is close to $X$ as measured by a distortion function $d:X\times X'\to \mathbb R_+$. 
There are two main approaches to specifying the distortion constraint, by requiring either a fixed average value or a fixed probability of exceeding a specified target value.
In the latter case one defines a $(k,d^\star,\eps)$ protocol for lossy compression of $X\sim P_X$ to consist of an encoder $E:X\to C$ and a decoder $D:C\to X'$ such that $P[d(X,D\circ E(X))>d^\star]\leq \eps$ and $\log |C|=k$. 
This excess distortion probability can be expressed as the expectation under $P_{XX'}$ of a test function $\Lambda$ which is 1 whenever $d(x,x')>d^\star$ and zero otherwise, i.e.\ $P[d(X,D\circ E(X))>d^\star]=\langle \Lambda,P_{XX'}\rangle$.
This setup is depicted in Figure~\ref{fig:lossy}. 
In the former case, a $(k,\bar d)$ protocol consists of an encoder and decoder such that $\mathbb E_{P_{XX'}} d(X,X')=\bar d$. 
The latter is a stronger criterion, since any $\eps$-good scheme with target value $d^\star$ can be converted into an average scheme with average distortion less than $d^\star+\eps \max_{x,x'}d(x,x')$ (see, e.g.\ \cite[Lemma 6]{datta_one-shot_2013-1}). 

\begin{figure}
\centering
\includegraphics{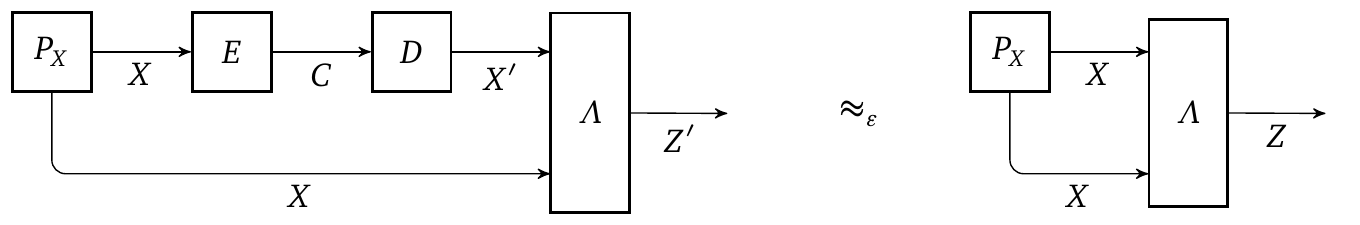}
\caption{\label{fig:lossy} Lossy compression of the random variable $X$. $\Lambda$ is a binary-valued test that indicates $1$ if $d(x,x')\geq d^\star$ and zero otherwise.}
\end{figure}

Channel simulation is one way to construct a lossy compression protocol, as shown in \cite{steinberg_simulation_1996,winter_compression_2002} (and later in the quantum setting in \cite{luo_channel_2009,datta_quantum_2013,datta_one-shot_2013-1}). 
For the case of excess distortion probability, we formalize the statement as follows. 
\begin{prop}
Given $P_X$, a distortion function $d$ and a target distortion $d^\star$, suppose that $W:X\to X'$ is a channel such that the target distortion for $X$ and $X'$ is exceeded with probability no larger than $\eps'$. 
Then a $(k,\eps)$ protocol for channel simulation of $W$ applied to $X$ can be used to construct a $(k,d^\star,\eps+\eps')$ lossy compression protocol for $X$. 
\end{prop}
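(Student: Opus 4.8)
The plan is to build the lossy compression protocol by simply running the channel simulation protocol and then checking that the reconstruction satisfies the distortion constraint. Let $E:(X,V)\to T$ and $D:(T,V)\to X'$ be the encoder and decoder of the $(k,\eps)$ channel simulation protocol for $W$ applied to $X$, so that $\log|T|=k$ and $\delta(P_{XX'},P_{X}W)\leq \eps$, where $P_X W$ denotes the true joint distribution $P_{XX'_W}$ produced by feeding $X$ through $W$, and $P_{XX'}$ is the distribution produced by the simulation with uniform common randomness $V$. For the lossy compressor I take the common randomness $V$ to be fixed shared data (it costs no communication in the compression model, or can be absorbed into the code as in the linear construction above), so the lossy encoder is $x\mapsto E(x,V)$ with output in $T$ (hence rate $k$) and the lossy decompressor is $t\mapsto D(t,V)$.

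The key step is to compare the excess-distortion probabilities under the two joint distributions via the test function $\Lambda$ with $\Lambda(x,x')=1$ iff $d(x,x')>d^\star$. Under the true channel we are told $\langle \Lambda, P_X W\rangle\leq \eps'$. Under the simulation we want to bound $\langle \Lambda, P_{XX'}\rangle$. Since $0\leq\Lambda\leq\id$, the definition of variational distance gives
\begin{align}
\langle \Lambda, P_{XX'}\rangle \leq \langle \Lambda, P_X W\rangle + \langle \Lambda, P_{XX'}-P_X W\rangle \leq \eps' + \delta(P_{XX'},P_X W) \leq \eps'+\eps\,.
\end{align}
Hence $P[d(X,D\circ E(X))>d^\star]=\langle\Lambda,P_{XX'}\rangle\leq\eps+\eps'$, which is exactly the claimed $(k,d^\star,\eps+\eps')$ guarantee.

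The argument is essentially a one-line application of data processing / the definition of $\delta$, so there is no serious obstacle; the only thing requiring care is bookkeeping about where the common randomness $V$ lives. In the channel-simulation model $V$ is shared randomness available to both parties, and in the lossy-compression model of the paper there is no such resource in the basic definition, so I should remark that $V$ can either be treated as a fixed string (derandomizing by choosing the best value of $V$, which only helps the excess-distortion probability after a Markov-type averaging argument) or, in the linear-code instantiation of Proposition~\ref{prop:pa2sim}, simply fixed to the all-zero coset, in which case the protocol is fully deterministic and the bound above goes through verbatim with $P_{XX'}$ the induced joint distribution. I would state this caveat explicitly so the reduction is airtight, and then note that composing with Proposition~\ref{prop:pa2sim} and Theorem~\ref{thm:PA} yields Corollary~\ref{cor:pa2lossy}.
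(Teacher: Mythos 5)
Your core argument is exactly the paper's: use the test $\Lambda$ indicating excess distortion and bound $\langle\Lambda,P_{XX'}\rangle$ by $\langle\Lambda,P_XW\rangle+\delta(P_{XX'},P_XW)\leq\eps'+\eps$ via the definition of variational distance. One small caution on your side remark about the common randomness: choosing the best value of $V$ by a Markov argument is correct (and is what the paper notes after the proposition), but the alternative you suggest—simply fixing $V$ to the all-zero coset—does not automatically go through verbatim, because the simulation guarantee $\delta(P_{XX'},P_XW)\leq\eps$ is only promised when $V$ is uniform; a fixed but arbitrary $V$ can induce a joint distribution far from $P_XW$.
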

\begin{proof}
Let $X''$ be the output of the channel simulation protocol, for which $\delta(P_{XX'},P_{XX''})\leq \eps$. 
By the properties of the variational distance, $\delta(P_{XX''},P_{XX'})\geq \langle \Lambda,P_{XX''}\rangle-\langle \Lambda,P_{XX'}\rangle$, and therefore the probability $\langle \Lambda,P_{XX''}\rangle$ that $X$ and $X''$ exceed the target distortion is no larger than $\eps+\eps'$. 
\end{proof}

As opposed to channel simulation, in lossy compression there is no need for randomized encoding or decoding operations. 
Simulation protocols adapted to lossy compression can in principle be derandomized: Since the output distribution $P_{XX''}$ is a mixture over the values of $V$ and the distortion probability is linear in $P_{XX''}$, one may as well pick that value of $V$ leading to the smallest excess distortion probability.  
This still leaves the randomness in the mapping $P_{T|X=x,V=v}$, but this can be derandomized in the same manner. 

In the i.i.d.\ setting it is perhaps more natural to consider the case of average distortion directly, since here we are interested in symbolwise distortion functions of the form $d(x^n,x'^n)=\tfrac1n \sum_{i}d_{\text{sym}}(x_i,x_i')$ for some function $d_{\text{sym}}$, as well as constructing protocols by simulating single symbol channels $W:X_i\to X'_i$ (rather than $W:X^n\to X'^n$). 
In this case we have
\begin{prop}
\label{prop:symboldistort}
Given $P_X^{\times n}$ and a symbolwise distortion function $d$ with $0\leq d_{\text{sym}}(x_i,x_i')\leq 1$ for all $x_i,x'_i$, suppose that $W:X\to X'$ is a channel such that $\mathbb E_{P_X} d_{\text{sym}}(X,X')=\bar d\leq 1$. 
Then a $(k,\eps)$ simulation protocol for $W^{\times n}$ acting on $X^n$ can be used to construct a $(k,\bar d+\eps)$ lossy compression scheme for $X^n$.
\end{prop}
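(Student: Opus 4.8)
The plan is to run the given simulation protocol, read off the average distortion of its output from the variational-distance guarantee, and then strip away all of its randomness, exactly as the surrounding text anticipates for the excess-distortion case.

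First I would let $X''^n=D(E(X^n,V),V)$ denote the output of the given $(k,\eps)$ simulation protocol, so that $\delta(P_{X^nX'^n},P_{X^nX''^n})\le\eps$, where $P_{X^nX'^n}=P_{XX'}^{\times n}$ is the joint distribution produced by $W^{\times n}$ on $X^n$. Since $0\le d(x^n,x'^n)=\tfrac1n\sum_i d_{\text{sym}}(x_i,x'_i)\le 1$, the block distortion $d$ is itself a legitimate test, so by the very definition of $\delta$ one gets $\mathbb E_{P_{X^nX''^n}}d=\langle d,P_{X^nX''^n}\rangle\le\langle d,P_{X^nX'^n}\rangle+\delta(P_{X^nX'^n},P_{X^nX''^n})\le\langle d,P_{X^nX'^n}\rangle+\eps$. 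By linearity of expectation and the i.i.d.\ structure, $\langle d,P_{X^nX'^n}\rangle=\tfrac1n\sum_i\mathbb E_{P_{XX'}}d_{\text{sym}}(X_i,X'_i)=\bar d$, so the (still randomized) scheme $(E,D)$ already achieves average distortion at most $\bar d+\eps$.

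Next I would derandomize. The average distortion is an affine functional of $P_{X^nX''^n}$, which is a mixture over the common randomness $V$, so some fixed value $v^\star$ of $V$ does no worse than the average; fixing it leaves a possibly stochastic encoder $P_{T|X^n,V=v^\star}$ and decoder $P_{X''^n|T,V=v^\star}$. Averaging the distortion first over the decoder's output conditioned on the message $T$ and then over $X^n$ given $T$, one may replace the decoder, message by message, with a reconstruction sequence minimising the conditional expected distortion, obtaining a deterministic $D'$ that is no worse; one may then replace the encoder, input by input, with a message minimising the distortion of $D'$ applied to that message, obtaining a deterministic $E'$. The pair $(E',D')$ routes everything through an alphabet of size $|T|$, so $\log|C|=k$, with average distortion at most $\bar d+\eps$, hence it is a $(k,\bar d+\eps)$ lossy compression scheme.

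The only step with genuine content is the first one: the hypothesis $d_{\text{sym}}\le 1$ is precisely what makes the normalised block distortion a valid test, so that the bound $\langle d,P-Q\rangle\le\delta(P,Q)$ is available; everything else mirrors the excess-distortion proposition verbatim because average distortion, like excess-distortion probability, is affine in $P_{X^nX''^n}$. The part I expect to require the most care is the derandomization, specifically checking that each of the three reductions (fixing $V$, derandomizing $D$, derandomizing $E$) preserves the factorisation of the overall map through the $k$-bit message and only decreases the average distortion; none of these steps is technically difficult, but they must be carried out in this order.
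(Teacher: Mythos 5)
Your proof is correct and follows essentially the same idea as the paper: exploit the fact that the (bounded) distortion function is itself a valid test in the definition of variational distance, so that $\langle d, P_{X^nX''^n}\rangle - \langle d, P_{X^nX'^n}\rangle \leq \delta \leq \eps$. The only difference in the first part is that you apply the block distortion $d = \tfrac1n\sum_i d_{\text{sym}}$ as a single test on the $n$-fold joint distribution, whereas the paper marginalizes to each coordinate pair $(X_i,X_i')$, uses $d_{\text{sym}}$ as a test there (relying on the data-processing inequality for $\delta$ to get $\delta(P_{X_iX'_i},P_{X_iX''_i}) \leq \eps$), and averages over $i$. These are two ways to slice the same observation and give identical conclusions. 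Your derandomization paragraph is a valid and more complete treatment of a point the paper only sketches in the preceding paragraph rather than inside this proof; the order you give (fix $V$, then the decoder message-by-message, then the encoder input-by-input) is the right one, since the encoder optimization must be done against the already-deterministic $D'$.
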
 
\begin{proof}
Let $X''^n$ be the output of the simulation protocol. 
Since the protocol is $\eps$-good, $\delta(P_{X_iX'_i},P_{X_iX''_i})\leq \eps$ for all $i\in [n]$. 
But by the definition of the variational distance, 
\begin{subequations}
\begin{align}
\delta(P_{X_iX'_i},P_{X_iX''_i})
&=\max_{0\leq f\leq 1}\sum_{x,y}f(x,y)(P_{X_iX''_i}(x,y)-P_{X_iX_i'}(x,y))\\
&\geq \sum_{x,y}d_{\text{sym}}(x,y)(P_{X_iX''_i}(x,y)-P_{X_iX_i'}(x,y))\\
&=\mathbb E_{P_{X_iX''_i}} d_{\text{sym}}(X_i,X''_i)-\bar d\,.
\end{align}
\end{subequations}
Thus $\mathbb E_{P_{X_iX''_i}} d_{\text{sym}}(X_i,X''_i)\leq \bar d+\eps$, and averaging over $i$ gives the desired result. 
\end{proof}

In the i.i.d.\ setting the optimal rate is given by the rate-distortion function $R(\bar d)=\inf_{W:d(X,W(X))\leq \bar d}I(X:W(X))$. 
When the optimal channel $W$ in this expression gives a uniformly-random $X'$, then we can employ an optimal rate channel simulation protocol to construct an optimal rate lossy compression procedure. 
Hence, in this case we can ultimately rely on privacy amplification of $X'$ relative to $X$ by linear functions to perform lossy compression of $X$ at the optimal rate. 
For fixed blocklength we have the following corollary:
\begin{corollary}
\label{cor:pa2lossy}
In the setting of Proposition~\ref{prop:symboldistort}, suppose $P_{XX'}$ is a joint distribution such that $\mathbb E_{P_{XX'}}d_{\text{sym}}(X,X')\leq \bar d$. 
Then any linear $(k,\eps)$ privacy amplification protocol for $X'^n$ relative to $X^n$ can be used to construct a $(n-k,\bar d+\eps)$ lossy compression scheme for $X^n$.
\end{corollary}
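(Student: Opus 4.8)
The plan is to assemble Corollary~\ref{cor:pa2lossy} by chaining together the three results already established in this section. First I would take the given linear $(k,\eps)$ privacy amplification protocol for $X'^n$ relative to $X^n$ and feed it into Proposition~\ref{prop:pa2sim}, which produces a $(\log|X'^n|-k,\,k,\,\eps)$ protocol for simulating the channel $W^{\times n}$ (where $W=P_{X'|X}$ is the channel associated with the joint distribution $P_{XX'}$) acting on $X^n$. The communication cost is $m=\log|X'^n|-k = n\log|X'|-k$; if we work over a binary alphabet for $X'$ this is exactly $n-k$ as claimed (and in general one absorbs the $\log|X'|$ factor into the blocklength convention, which is the implicit setting of the corollary).

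Next I would invoke Proposition~\ref{prop:symboldistort}: since $\mathbb E_{P_{XX'}} d_{\text{sym}}(X,X')\leq \bar d\leq 1$ by hypothesis and $0\leq d_{\text{sym}}\leq 1$, the $(k,\eps)$-good simulation of $W^{\times n}$ on $X^n$ yields an $(m,\bar d+\eps)$ lossy compression scheme for $X^n$ with the same communication cost $m=n-k$. This is the core of the argument, and essentially no new work is required beyond observing that the hypotheses of Proposition~\ref{prop:symboldistort} are met verbatim.

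Finally, I would note that the randomness used in the simulation-based scheme can be eliminated, so that the resulting lossy compressor is a genuine deterministic encoder/decoder pair as required by the definition of a $(k,d^\star)$ or $(k,\bar d)$ protocol. This is exactly the derandomization remark made just before Proposition~\ref{prop:symboldistort}: because the output $P_{X^n X''^n}$ is a mixture over the common randomness $V$ and the average distortion is linear (affine) in this distribution, one may fix $V$ to the value minimizing the average distortion without increasing it; the residual randomness in $P_{T|X^n,V}$ is derandomized the same way. Hence the scheme becomes deterministic at the same rate and distortion.

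The main obstacle — really the only subtlety — is bookkeeping around the alphabet size of $X'$: Proposition~\ref{prop:pa2sim} charges $\log|Y|-k$ bits of communication, so to land exactly on ``$n-k$'' one needs $X'$ binary, or else to reinterpret the statement with the blocklength measured in symbols of $X'$. I would state this convention explicitly. Everything else is a direct substitution into previously proved statements, and the proof is therefore essentially one line modulo this remark.
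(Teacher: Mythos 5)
Your proposal is correct and follows exactly the route the paper intends: compose Proposition~\ref{prop:pa2sim} (linear privacy amplification yields a channel simulation protocol with communication $\log|X'^n|-k$) with Proposition~\ref{prop:symboldistort} (an $\eps$-good simulation of the distortion-$\bar d$ channel gives a lossy scheme with average distortion $\bar d+\eps$). Your remark about the alphabet convention is a fair observation — the paper's statement ``$n-k$'' implicitly assumes $X'$ is binary, consistent with the linear-code framing used throughout — and your note on derandomization correctly mirrors the discussion preceding Proposition~\ref{prop:symboldistort}.
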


The standard example of a uniformly-random source $X$ and Hamming distortion $d_{\text{sym}}(x,x')=\delta_{x,x'}$ has a uniform $X'$, since the optimal channel is just a binary symmetric channel with crossover probability $\bar d$~\cite[\S10.3.1]{cover_elements_2006}.
So, too, does the binary erasure quantization example of Martinian and Yedidia~\cite{martinian_iterative_2003}. 
Here the input $X$ has alphabet $\{0,1,?\}$, with probabilities $(1-e)/2$, $(1-e)/2$ and $e$ for some $0\leq e\leq 1$, respectively, and the symbol distortion function $d_{\text{sym}}(x,x')$ is 0 if $x=?$ or $x=x'$, and 1 otherwise. 
As reported by Kostina and Verd\'u in \cite[Equation 202]{kostina_fixed-length_2012}, the rate distortion function for this case is $R(\bar d)=(1-e)(1-h_2(\tfrac{\bar d}{1-e}))$ (after adapting the notation to the present setting).
The optimal channel is a concatenation of the map which randomly assigns $?$ inputs to 0 and 1, but leaves those input values untouched, and a binary symmetric channel with crossover probability $\bar d/(1-e)$.
It therefore has uniform output over \{0,1\} for any value of $e$.

\section{Lossy compression from channel coding}
\label{sec:duality}

By making use of duality relations for channels and codes, we can show that linear error-correcting codes can be used to build lossy source codes. 
Suppose $X$ is the random variable to be lossily compressed, and $P_{XX'}$ is the optimal joint distribution in the rate distortion function. 
This induces a channel $W=P_{X|X'}$ given the marginal $P_{X'}$; note that this channel is defined in the opposite sense to \S\ref{sec:channelsim}. 
Corollary~\ref{cor:pa2lossy} establishes that lossy compression can be constructed from privacy amplification of the input of $W$ relative to its output. 
But, following \cite{renes_duality_2011,renes_duality_2018}, this task is dual to channel coding (or lossless compression) for the dual channel $W^\perp:X'\to B$, where now $W^\perp$ is a channel whose output is quantum-mechanical (see \cite[\S3.2]{renes_duality_2018} for a precise definition). 
Here we restrict attention to symmetric channels $W$. 

Specifically, Corollary 8 of \cite{renes_duality_2018} ensures that a $(k,\eps)$ code $C$ for $W^\perp$ (where $\eps$ is the average error probability under the optimal decoder) leads to a linear $(k,2\sqrt{\eps})$ privacy amplification protocol for $X'$ relative to $X$. 
The extractor function $f$ is given by the generator matrix $G$ of $C$ acting to the right, i.e.\ the parity check matrix of $C^\perp$. 
In fact, using Theorem 5.1 of \cite{renes_duality_2011} we can improve the security parameter to $\sqrt{2\eps}$ (the difference stems from the use of the max-entropy in Corollary 8, which involves an optimization over the marginal of $Z$ rather than using actual marginal $P_Z$ directly). 
Combining this with Corollary~\ref{cor:pa2lossy}, we obtain
\begin{corollary}
\label{cor:duality}
For $P_{X|X'}$ the optimal conditional distribution appearing in the rate-distortion function, let $W^\perp$ be the dual channel according to \cite{renes_duality_2018}. Then a $(k,\eps)$ code $C$ for $W^\perp$ can be used to construct an $(n-k,\sqrt{2\eps})$ lossy compression scheme for $X^n$. 
\end{corollary}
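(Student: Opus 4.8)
\emph{Proof strategy.} The plan is to chain together the channel--code duality of \cite{renes_duality_2018,renes_duality_2011} with Corollary~\ref{cor:pa2lossy}; essentially no new argument is needed beyond careful bookkeeping of parameters and checking that the hypotheses line up. First I would unpack what a good code for $W^\perp$ buys us. Since $W=P_{X|X'}$ is symmetric, the optimal reconstruction distribution $P_{X'}$ in the rate--distortion function is uniform, which is precisely the situation under which the dual channel $W^\perp:X'\to B$ is defined in \cite[\S3.2]{renes_duality_2018} and under which its linear codes correspond to linear extractors. Invoking Corollary~8 of \cite{renes_duality_2018}, a $(k,\eps)$ code $C$ for $W^\perp$ (error measured under the optimal decoder) yields a linear $(k,2\sqrt{\eps})$ privacy amplification protocol for $X'$ relative to $X$, with extractor $f$ equal to the generator matrix $G$ of $C$ acting to the right, i.e.\ the parity-check matrix of $C^\perp$. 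Replacing the max-entropy estimate used there by Theorem~5.1 of \cite{renes_duality_2011}, which works directly with the marginal $P_X$ rather than optimizing over it, tightens the security parameter to $\sqrt{2\eps}$. Applied at blocklength $n$ to $P_{XX'}^{\times n}$, this produces a linear $(k,\sqrt{2\eps})$ privacy amplification protocol for $X'^n$ relative to $X^n$.

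Second, I would feed this into Corollary~\ref{cor:pa2lossy}. Because $P_{X|X'}$ is the optimal conditional distribution in the rate--distortion function at some target $\bar d$, we have $\mathbb E_{P_{XX'}}d_{\text{sym}}(X,X')\leq \bar d$, so the hypotheses of Proposition~\ref{prop:symboldistort} and hence of Corollary~\ref{cor:pa2lossy} are met. That corollary converts the linear $(k,\sqrt{2\eps})$ privacy amplification protocol into an $(n-k,\bar d+\sqrt{2\eps})$ lossy compression scheme for $X^n$: the $n-k$ arises because the extractor $G$ outputs $k$ of the $n$ bits of $X'^n$, so the transmitted part $T^n$ has $n-k$ bits, and the extra $\sqrt{2\eps}$ is the distortion penalty from replacing the simulated channel output with the privacy-amplification output, via the variational-distance bound in the proof of Proposition~\ref{prop:symboldistort}. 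Unwinding the channel-simulation construction of \S\ref{sec:channelsim}, the decompressor applies $g^{-1}$, which produces a codeword of the code whose generator matrix is the complement of $G$ in an invertible extension, i.e.\ of $C^\perp$; hence the reconstructions $X'^n$ are built from codewords of $C^\perp$, as claimed. Writing the distortion increment above the rate--distortion value $\bar d$ as $\sqrt{2\eps}$ gives the stated parameters.

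The ``hard part'' here is not an estimate but hypothesis compatibility. The three points to verify are: (i) symmetry of $W$ genuinely forces $P_{X'}$ uniform, so that $W^\perp$ and the two duality statements apply verbatim; (ii) the improvement from $2\sqrt{\eps}$ to $\sqrt{2\eps}$ via Theorem~5.1 of \cite{renes_duality_2011} is legitimate in the $n$-fold i.i.d.\ setting, which it is, since that theorem is a one-shot statement instantiated at $P_{XX'}^{\times n}$; and (iii) the quantum output of $W^\perp$ causes no difficulty, since it is entirely internal to the cited duality result and the lossy compressor produced is purely classical and, moreover, deterministic, so no derandomization subtlety of the kind discussed after Proposition~\ref{prop:symboldistort} needs to be revisited. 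Everything else is routine composition of the cited black boxes.
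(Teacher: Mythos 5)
Your proposal follows the paper's argument step for step: invoke Corollary~8 of \cite{renes_duality_2018} to convert the $(k,\eps)$ code $C$ for $W^\perp$ into a linear $(k,2\sqrt{\eps})$ privacy-amplification protocol for $X'$ relative to $X$ with extractor the generator matrix $G$ of $C$ (equivalently the parity-check matrix of $C^\perp$), tighten the security parameter to $\sqrt{2\eps}$ via Theorem~5.1 of \cite{renes_duality_2011}, and then feed the resulting protocol through Corollary~\ref{cor:pa2lossy} to obtain a lossy compression scheme of size $n-k$ and average distortion at most $\bar d+\sqrt{2\eps}$, with reconstructions built from codewords of $C^\perp$. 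You also correctly identify that the corollary's second parameter should be read as the distortion increment above $\bar d$.

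One small factual slip in your point (iii): the quantizer produced by this construction is \emph{stochastic}, not deterministic --- it is built from the conditional distribution $P_{T|XV}$ together with a choice of $V$, exactly as in Proposition~\ref{prop:pa2sim}, and the paper says as much immediately after Corollary~\ref{cor:duality}. The derandomization remark following Proposition~\ref{prop:symboldistort} is therefore precisely the relevant one here (one may fix $V$ and the internal randomness to the best choice without increasing the average distortion), but the out-of-the-box encoder does use randomness. This does not affect the stated parameters, so the proof is otherwise sound and essentially identical to the paper's.
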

The reconstruction operation outputs codewords of $C^\perp$, shifted to a coset determined by the common randomness $V$. 
Meanwhile, the quantizer or compressor is stochastic, based on the conditional distribution  distribution $P_{T|XV}$ as in Proposition~\ref{prop:pa2sim}. 

The dual channels for the examples mentioned above can be explicitly given. 
For the case of Hamming distortion, the optimal channel from $X'$ to $X$ is also a BSC with crossover probability $\delta=\bar d$, which means the dual channel takes the classical input $z$ to the pure state $\ket{\theta_z}=\sqrt{\delta}\ket{0}+(-1)^z\sqrt{1-\delta}\ket 1$. 
For binary erasure quantization, the optimal channel is a concatenation of a BSC with crossover probability $\delta=\bar d/(1-e)$ with an erasure channel with erasure probability $e$. The dual of this channel is computed in Example 3.9 of \cite{renes_alignment_2016}. Its output consists of two independent parts, one classical and one quantum-mechanical. 
The classical part is the just the output of the erasure channel with erasure probability $1-e$, while the quantum part is exactly the output of the dual of the BSC.
Thus, with probability $e$ the input to the channel shows up unchanged in the classical part of the output, but even when the classical part is useless, the quantum part contains some information about the input. 

Note that when $\bar d=0$ in the latter case, the two quantum states $\ket{\theta_z}$ are identical, so the quantum part of the dual output is useless. 
Then the dual is effectively just the erasure channel with erasure probability $1-e$.  
The above relation between error-correcting code $C$ for $W^\perp$ and the use of the dual code $C^\perp$ for privacy amplification of $W$ partly explains the ``curious duality between erased/known symbols in source coding and known/erased symbols in channel coding'' observed by Martinian and Yedidia~\cite{martinian_iterative_2003}. 
By more direct analysis of error-correction and lossy compression for the erasure channel, they show an equivalence between the two tasks. 
Here we have made use of the more general theory of duality and shown one direction of the equivalence, namely that error-correction implies lossy compression. 

This also makes sense of recent results on the optimality of polar codes and spatially-coupled low-density generator matrix codes for lossy compression, shown in \cite{korada_polar_2010-1} and \cite{aref_approaching_2015}, respectively. 
Polar codes are their own duals in the sense that the dual of a polar code is given by using the frozen bits instead of the information bits, which is again a code constructed by the properties of synthesized channels. 
Thus, the fact that polar codes achieve the capacity for classical-quantum channels described above, which follows from the general result of~\cite{wilde_polar_2013}, implies that polar codes achieve the rate-distortion bound for the associated sources.
One simply has to base the scheme on the synthesized inputs with high entropy, exactly as done in \cite{korada_polar_2010-1}.  
Similarly, it is not unreasonable to suspect that spatially-coupled low-density parity check (LDPC) codes achieve the capacity of these classical-quantum channels under optimal decoding. 
This would imply that their duals, low-density generator matrix (LDGM) codes, achieve the rate-distortion bound for the associated sources, precisely as shown in \cite{aref_approaching_2015}. 
Duality also helps explain that LDPC codes are themselves not useful for lossy compression, as observed in \cite{martinian_iterative_2003}, as otherwise their duals, LDGM codes, would be good for channel coding, and this is known not to be the case~\cite{mackay_good_1999}.  
Note that the implications based on duality say nothing about complexity of encoding or decoding operations for either channel coding or lossy compression; indeed, establishing low complexity bounds is the better part of the results of \cite{korada_polar_2010-1} and \cite{aref_approaching_2015}.

\section{Discussion}
We have given new bounds on the optimal rate of privacy amplification in a one-shot setting and seen that the converse bound is equivalent to bounds based on the smooth min-entropy as well as the $E_\gamma$ divergence, but avoids the computational difficulties of smoothing in the finite-blocklength setting. %improves substantially on previously-known results.
While the achievability bound is not the tightest known in the literature, the formulation of both in terms of hypothesis testing has advantages of its own. 
One is the clear connection of information theory to statistics. 
More conceptually, using a common quantity in optimal rate bounds allow us to see the concrete relationship of covering and packing problems more plainly. 
Namely, for approximation parameter $\eps$, rate bounds for covering problems involve $\beta_\eps$, while those for packing problems involve $\beta_{1-\eps}$. 
This supports the notion that packing is dual to covering.
An open question is whether the hypothesis-testing approach can be extended to covering problems involving quantum information. 

%Hayashi attributes the hypothesis testing approach to Nagaoka~\cite[\S4.8.2]{hayashi_quantum_2017}.

We have also shown that privacy amplification is a primitive for constructing channel simulation and lossy compression protocols. 
Doing so enables us to extend the known duality of codes for packing (lossless compression) and covering (privacy amplification) to the covering problem of lossy compression. 
Specifically, coding duality implies that duals of good channel codes lead to good lossy source codes, at least for symmetric channels and lossy compression setups.  
An immediate open question in this context is whether one can go in the other direction, from lossy source coding back to channel coding. 
This was observed to be the case for the binary erasure channel and binary erasure quantization in \cite{martinian_iterative_2003}. 
Perhaps the most straightforward approach to demonstrating this would be to show that lossy source codes can be used for privacy amplification, since duality ensures that a good privacy amplification protocol implies the existence of a good channel code for the dual (cf \cite[Corollary 8]{renes_duality_2018}, \cite[Theorem 5.2]{renes_duality_2011}).
One could also investigate whether duality leads to improved finite blocklength bounds on privacy amplification. 
However, since the dual setup involves a quantum output, this seems doubtful as the bounds available in this case are not as tight as for classical output (see \cite{tomamichel_hierarchy_2013}). 

A much bigger and more tantalizing open question is whether duality also provides a link between algorithms for channel decoding and source quantization, e.g.\ using belief propagation (BP). 
Such a link was shown in Theorem 3 of \cite{martinian_iterative_2003} for the binary erasure coding and quantization problems, and progress on the general case could help in finding new bounds on the performance of BP. 
A first step in this direction would be to extend the notion of BP decoding to the duals of classical channels. 
This is presumably possible by extending our construction in \cite{renes_belief_2017}, which dealt with the dual of the BSC, since any symmetric binary-input classical channel can be regarded as a mixture of BSCs.
Though duality was not used in the construction therein, in retrospect its role is evident. 
In particular, the unitaries for combining quantum information at check and variable nodes could have been determined by appealing to the convolution rules for the BSC and Theorem 1 of \cite{renes_duality_2018}, which states that the dual of a check convolution is the variable convolution of the duals and similarly for the dual of a variable convolution.\footnote{Despite my best efforts, an error in the check node convolution in an initial draft of \cite{renes_duality_2018} persists in the final published version (though not the most recent arXiv version): The check node convolution should read $U_\boxasterisk=\textsc{cnot}_{1\to 2}$. Thanks to Narayanan Rengaswamy for pointing this out, twice!} 
All of which hints concretely to the possibility that duality can shed light on BP, but the details remain to be seen.

\vspace{2mm}
{\bfseries{Acknowledgments.}} I thank Renato Renner and Marco Tomamichel for helpful  discussions. 
Particular thanks to Fr\'ed\'eric Dupuis for providing the upper bound of Lemma~\ref{lem:vdbeta}, Henry Pfister for pointing me to \cite{martinian_iterative_2003}, and Wei Yang for pointing out the relationship between Theorem~\ref{thm:PA} and the results of \cite{yang_wiretap_2017} and \cite{liu_Egamma_2017}. 
This work was supported by the Swiss National Science Foundation (SNSF) via the National Centre of Competence in Research “QSIT”, as well as the Air Force Office of Scientific Research (AFOSR) via grant FA9550-16-1-0245.

\printbibliography[heading=bibintoc,title=References]

\end{document}